\newtheorem{example}{Example}
\newtheorem{theorem}{Theorem}
\newtheorem{definition}{Definition}
\newtheorem{observation}{Observation}
\newtheorem{corollary}{Corollary}
\title{Incentive-Compatible Selection for One or Two Influentials}
\author{
Yuxin Zhao
\and
Yao Zhang\And
Dengji Zhao
\affiliations
ShanghaiTech University
\emails
\{zhaoyx5, zhangyao1, zhaodj\}@shanghaitech.edu.cn
}
\begin{document}

\maketitle

\begin{abstract}
Selecting influentials in networks against strategic manipulations has attracted many researchers' attention and it also has many practical applications. 
Here, we aim to select one or two influentials in terms of progeny (the influential power) and prevent agents from manipulating their edges (incentive compatibility). The existing studies mostly focused on selecting a single influential for this setting.
\citeauthor{zhang2021incentive}~\shortcite{zhang2021incentive} studied the problem of selecting one agent and proved an upper bound of $1/(1+\ln 2)$ to approximate the optimal selection. In this paper, we first design a mechanism to actually reach the bound. Then, we move this forward to choosing two agents and propose a mechanism to achieve an approximation ratio of $(3+\ln2)/(4(1+\ln2))$ ($\approx 0.54$).


\end{abstract}

\section{Introduction}
Consider the scenario where we want to select influential agents in a network constructed by referral relationships (e.g., the following relationships in Twitter, the citations between academic papers, etc.). The selected agents may be rewarded with prizes or benefits (e.g., job opportunities~\cite{kotturi2020hirepeer}). Hence, agents have the incentive to manipulate their relationships to make themselves selected. Therefore, selection mechanisms that can prevent agents from strategic manipulations (which is referred to as the property of incentive compatibility) are highly demanded~\cite{alon2011sum}.

Many studies have investigated incentive-compatible selection mechanisms on different influence measurements for different purposes (see \cite{olckers2022manipulation} for a complete survey). 
In this paper, we focus on the setting where an agent's influential power is measured by her progeny (the number of all agents who directly or indirectly follow her).
For this setting, two studies have been conducted before. \citeauthor{babichenko2020incentive2}~\shortcite{babichenko2020incentive2} proposed the first single agent selection mechanism for progeny maximization that can prevent agents from adding or hiding their out-edges. Their mechanism reaches an approximation ratio of about $1/3$ (i.e., the expected progeny of the chosen agent is about $1/3$ of the largest). However, their mechanism only works in forests. Therefore, \citeauthor{zhang2021incentive}~\shortcite{zhang2021incentive} further studied the same problem in directed acyclic graphs (DAGs) with restricting manipulations in the scope of hiding edges (agents cannot add new edges). 
Their proposed mechanism achieves an approximation ratio of $1/2$. Moreover, they proved an upper bound $1/(1+\ln2)$ of the approximation ratio for any incentive-compatible and fair selection mechanism in the DAG setting. 

In this paper, we follow the DAG setting of \cite{zhang2021incentive} and make the following contributions:
\begin{itemize}
    \item For selecting one agent, we close the gap between the known approximation ratio of $1/2$ and the upper bound of $1/(1+\ln2)$. We propose a mechanism to achieve the exact upper bound.
    \item For selecting two agents, we show that, for the class of mechanisms that only select agents from the $1$-influential set\footnote{$1$-influential set contains all agents each of whom has the largest progeny by hiding her out-edges.} (most of the existing mechanisms belong to this class), the approximation ratio cannot exceed $1/2$ if the target is to select at most two agents. Moreover, we provide a deterministic mechanism in this class that exactly reaches the approximation ratio of $1/2$.
    \item We then propose a new incentive-compatible mechanism based on a $2$-influential set for selecting two agents. The new mechanism achieves a higher approximation ratio of $(3+\ln2)/(4(1+\ln2))$ ($\approx 0.54$). We also provide a general upper bound ($23/27$) of any incentive-compatible mechanism for selecting two agents.
\end{itemize}

\subsection{Other Related Work}
Many studies on incentive-compatible selection mechanisms use in-degrees to measure agents' influential power, which is also referred to as peer selection.
For the in-degree measurement, \citeauthor{alon2011sum}~\shortcite{alon2011sum} firstly proposed an incentive-compatible peer selection mechanism by a randomized partition method, which divides agents into two groups and chooses the agents according to their in-degrees from the other group. 
Following this work, there are two major directions. One is to characterize incentive-compatible peer selection mechanisms with axioms, which is initiated by~\citeauthor{holzman2013impartial}~\shortcite{holzman2013impartial}. 
\citeauthor{mackenzie2015symmetry}~\shortcite{mackenzie2015symmetry} continued this study by adding symmetric axiomizations. The other direction is to improve the approximation ratios of the existing incentive-compatible peer selection mechanisms. \citeauthor{fischer2014optimal}~\shortcite{fischer2014optimal} extended the idea of the partition mechanism to a permutation mechanism, which achieves the optimal approximation ratio ($1/2$) for selecting one agent with in-degree. Then, 
\citeauthor{bousquet2014near}~\shortcite{bousquet2014near} characterized a class of networks where the permutation mechanism selects an agent close to the optimal. 
\citeauthor{bjelde2017impartial}~\shortcite{bjelde2017impartial} generalized the permutation mechanism for selecting multiple agents, and gave both lower and upper bounds of the approximation ratio. 
Recent studies also started to consider an alternative evaluation called additive approximation, which focuses on the expected difference from the optimal rather than the worst-case ratio~\cite{caragiannis2021impartial,caragiannis2022impartial,DBLP:conf/sigecom/CembranoFHK22}.
There are also extensions on the networks, including a weighted network, where the influential power is a weighted in-degree~\cite{kurokawa2015impartial,wang2018tsp,babichenko2020incentive1}, and rank aggregation, where each agent assigns a rank to others~\cite{kahng2018ranking,DBLP:conf/ijcai/MatteiTZ20}.

There is also a rich body of work focusing on different measurements of influential power. \citeauthor{ghalme2018design}~\shortcite{ghalme2018design} designed a naturally strategy-proof score function to measure the popularity of agents, thereby simplifying the selection. In contrast, 
\citeauthor{babichenko2018incentive}~\shortcite{babichenko2018incentive} targeted a PageRank-like centrality~\cite{page1999pagerank} and offered a two-path mechanism which achieves a good approximation ratio of $2/3$. Moreover, \citeauthor{babichenko2020incentive2}~\shortcite{babichenko2020incentive2} focused on the progeny of the selected agent and proposed a mechanism with an approximation ratio of $1/(4\ln2)$ in forests. \citeauthor{zhang2021incentive}~\shortcite{zhang2021incentive} then proposed a geometric mechanism with an approximation ratio of $1/2$ in DAGs, which is what we follow here.

\section{Preliminaries} 
Let $\mathcal{G}_{n}$ be the set of all directed acyclic graphs (DAGs) with $n$ nodes and $\mathcal{G}=\bigcup_{n\in\mathbb{N}^{+}}\mathcal{G}_{n}$ be the set of all DAGs. Consider a network represented by a graph $G=(N, E)\in\mathcal{G}$, where $N = \{1,2,\dots, n\}$ is the node set and $E$ is the edge set. Each node $i\in N$ represents an agent, and each edge $(i,j)\in E$ represents that agent $i$ follows (votes for, or quotes) agent $j$. Let $E_i = \{(i,j) \mid (i,j)\in E, j\in N \}$ be the set of all edges from $i$. We say an agent $j$ is influenced by the agent $i$ if there exists a path from $j$ to $i$ in $G$. Let $P(i,G)$\footnote{or simply $P(i)$ if no ambiguity. Additionally, for the sake of tidiness, the notation of the set also indicates its size if no ambiguity.} be the set of agents who are influenced by agent $i$ (including $i$ herself), 
which is referred to as the progeny of agent $i$.

Our goal is to select a group of agents in the network as delegates with larger progeny. Let $\mathcal{S}_k = \{ S \mid S\subseteq N, |S| = k\}$ be the set of all subsets with $k$ agents, and $\mathcal{S}_{\leq k} = \bigcup_{t=0}^k \mathcal{S}_t$. A \emph{$k$-selection mechanism} decides how to choose up to $k$ agents as delegates.

\begin{definition}
A $k$-selection mechanism for $\mathcal{G}$ is a family of functions $f: \mathcal{G}_n \rightarrow [0,1]^{\mathcal{S}_{\leq k}}$ for all $n\in \mathbb{N}^+$, that maps each graph to a probability distribution on all subsets with no more than $k$ agents.
\end{definition}

For a given graph $G\in \mathcal{G}$ and a $k$-selection mechanism $f$, denote $x_S(G) = (f(G))_S$ as the probability of the subset $S\in \mathcal{S}_{\leq k}$ being selected, and $x_i(G) = \sum_{S\in \mathcal{S}_{\leq k} \text{ and } i\in S} x_S(G)$ as the probability of the agent $i$ being selected.

For an agent $i\in N$, she wants her probability to be selected ($x_i$) as large as possible, while for the owner of the mechanism, we want the influential power of the selected group (the sum of the progeny) as large as possible. Unfortunately, if we simply choose an optimal subset with $k$ agents, i.e., $S^*_k \in \arg\max_{S\in \mathcal{S}_k} \sum_{i\in S} |P(i, G)|$, then agents will have incentives to hide their edges to increase their ranks to be selected. We want to avoid such a manipulation, which requires the mechanism to be \emph{incentive-compatible}.

\begin{definition}
A $k$-selection mechanism for $\mathcal{G}$ is incentive-compatible (IC) if for every $n\in \mathbb{N}^+$, and every two graphs $G = (N, E)$, $G' = (N, E')\in \mathcal{G}_n$, such that $E\setminus E_i = E'\setminus E'_i$ and $E_i\supseteq E_i'$ for $i\in N$, we have $x_i(G) \geq x_i(G')$.
\end{definition}

Intuitively, incentive compatibility implies that no matter how other agents follow each other, it is an undominated strategy for any agent not to hide her out-edges. Since an incentive-compatible $k$-selection mechanism cannot always choose a group with the highest influential power, we seek the approximation of the optimum, which guarantees a worst-case ratio between the expected progeny of the selected group and an optimal group for all DAGs.

\begin{definition}
An incentive-compatible $k$-selection mechanism is $\alpha$-optimal if
\[ \inf_{G\in \mathcal{G}} \frac{\mathbb{E}_{S\sim x_S(G)}[\sum_{i\in S} P(i,G)]}{\sum_{i\in S_k^*} P(i,G)} \geq \alpha. \]
\end{definition}

For convenience, we can characterize an optimal group by defining a strict order of agents as follows.

\begin{definition}
For a graph $G = (N,E)\in \mathcal{G}$, and agents $i$, $j\in N$, $i\neq j$, we say $i\succ j$ if either $P(i,G) > P(j,G)$ or $P(i,G) = P(j,G)$ with $i > j$.
\end{definition}

Let $i_t^*$ be the agent with rank $t$ such that $|\{j\mid j\succ i_t^*\}| = t-1$, which must be unique since the order is strict. Then, we can order all the agents as the \emph{ranking sequence} $i_1^*\succ i_2^*\succ \dots \succ i_n^*$. Apparently, $\{i_1^*, \cdots, i_k^* \}$ is an optimal set for selecting $k$ agents. Hence, for our strategic setting, we will pay attention to agents who can pretend to be the first $k$ agents in the ranking sequence by hiding their out-edges.

\begin{definition}\label{def:topset}
For a graph $G = (N, E)$, an agent $i$ belongs to the $k$-influential set $S^{\mathsf{inf.}}_k(G)$ if $|\{j\mid j\succ i\}| < k$ holds in the graph $G' = (N, E\setminus E_{i})$.
\end{definition}

In this paper, we mainly focus on the cases for $k \in \{1, 2\}$. Hence, we use some observations about $S^{\mathsf{inf.}}_1(G)$ and $S^{\mathsf{inf.}}_2(G)$. To make it easy to follow, we present the observations about $S^{\mathsf{inf.}}_1(G)$ below as preparation and present the observations of $S^{\mathsf{inf.}}_2(G)$ in Section~\ref{sec:ontop2-set}.

\begin{observation}[\cite{zhang2021incentive}]\label{ob:1-like}
For any graph $G$, the set $S^{\mathsf{inf.}}_1(G)$ can be written as $\{i_1, i_2, \cdots, i_m\}$, where $m\geq 1$, $i_1 = i_1^*$, and $i_{t+1} \in P(i_t)\setminus\{i_t\}$ for all $t<m$.
\end{observation}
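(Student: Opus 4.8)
The plan is to establish three facts and then assemble them. (a) $i_1^*\in S^{\mathsf{inf.}}_1(G)$, so in particular $m\ge 1$. (b) Any two distinct members of $S^{\mathsf{inf.}}_1(G)$ are \emph{comparable}: one of them lies in the progeny of the other. (c) Whenever $i\in P(j)$ with $i\neq j$, we have $j\succ i$. Granting these, I would order the elements of $S^{\mathsf{inf.}}_1(G)$ in decreasing $\succ$-order as $i_1\succ i_2\succ\dots\succ i_m$; by (a) the top element is $i_1=i_1^*$, and for each $t<m$, (b) makes $i_t$ and $i_{t+1}$ comparable while (c) forbids $i_t\in P(i_{t+1})$ (that would give $i_{t+1}\succ i_t$), so $i_{t+1}\in P(i_t)\setminus\{i_t\}$, which is exactly the claim.

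The workhorse is an elementary claim about hiding out-edges. For an agent $i$, write $G_i=(N,E\setminus E_i)$. Then: (1) $P(i,G_i)=P(i,G)$, because in a DAG no path that \emph{ends} at $i$ can traverse an out-edge of $i$; (2) if $j\notin P(i,G)$ then $P(j,G_i)=P(j,G)$, since any path ending at $j$ that used an out-edge of $i$ would contain a sub-path from $i$ to $j$, witnessing $j\in P(i,G)$; and (3) $P(k,G_i)\le P(k,G)$ for every $k$, as $G_i$ is a subgraph of $G$. Fact (a) is immediate from this: taking $i=i_1^*$ gives $P(i_1^*,G_{i_1^*})=P(i_1^*,G)\ge P(j,G)\ge P(j,G_{i_1^*})$ for all $j$, and since node indices are unchanged the tie-breaking in $\succ$ still favours $i_1^*$, so $i_1^*$ remains the $\succ$-maximum after hiding its out-edges.

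For (b) I would argue by contradiction: if $i,j\in S^{\mathsf{inf.}}_1(G)$ are distinct and incomparable, then hiding $i$'s out-edges leaves both $P(i)$ and $P(j)$ unchanged (by (1) and, since $j\notin P(i,G)$, by (2)), so the fact that $i$ is the $\succ$-maximum in $G_i$ already forces $i\succ j$ in $G$; the symmetric argument (using $i\notin P(j,G)$) forces $j\succ i$ in $G$, a contradiction. Fact (c) is a direct consequence of acyclicity: $i\in P(j)$ implies $P(i)\subseteq P(j)$ by concatenating paths, while $j\notin P(i)$ (otherwise $G$ has a cycle through $i$ and $j$), so $P(i)\subsetneq P(j)$ and hence $j\succ i$.

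The step I expect to be the real content is (b): the insight that two agents who can each unilaterally rise to the top must lie on a common directed path, together with the realisation that the contradiction comes precisely from the fact that deleting an agent's out-edges leaves the progeny of any incomparable agent — and of that agent herself — intact. Once (b) is in place, (a), (c), and the final assembly follow routinely from the definitions and from the DAG structure.
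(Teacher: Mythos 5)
The paper does not actually prove this observation---it is imported from prior work and accompanied only by the intuition that a lower-ranked member of the set must lie in the progeny of every higher-ranked member, since otherwise hiding her out-edges could not demote them. Your decomposition into (a), (b), (c) formalises exactly that intuition; the assembly, fact (a), fact (c), and the contradiction argument in (b) are all sound.

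However, your auxiliary claim (2) is stated with the direction of influence flipped, and as written it is false. Under the paper's convention, $P(j,G)$ is the set of agents having a path \emph{to} $j$, so a path ending at $j$ that traverses an out-edge of $i$ contains a sub-path from $i$ to $j$ and therefore witnesses $i\in P(j,G)$, not $j\in P(i,G)$. The correct statement is: if $i\notin P(j,G)$ then $P(j,G_i)=P(j,G)$. Your version fails on the three-node graph with edges $(i,j)$ and $(k,i)$: there $j\notin P(i,G)$, yet $P(j,G)=\{i,j,k\}$ while $P(j,G_i)=\{j\}$. The slip is harmless for the final result, because in step (b) you invoke (2) under incomparability, which supplies both $j\notin P(i,G)$ and the hypothesis $i\notin P(j,G)$ that is actually needed; but the justification ``since $j\notin P(i,G)$, by (2)'' should read ``since $i\notin P(j,G)$, by (2).'' With that one-line repair the proof is complete and follows the same route as the paper's sketch.
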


Intuitively, the agent $i_1^*$ who ranks the first is naturally in the 1-influential set. Furthermore, if there are more than one agent in the set, the agent who has a lower rank must be in the progeny of those who have higher ranks; otherwise, she will still have a lower rank when deleting her out-edges. Hence, in other words, we can find a path in $G$ that passes through all agents in the set with the order of their ranks.

\begin{observation}[\cite{zhang2021incentive}]\label{ob:1-ratio}
For any graph $G$, if the set $S^{\mathsf{inf.}}_1(G)=\{i_1, i_2, \cdots, i_m\}$ has more than one agent, i.e., $m>1$, then for any $1<t\leq m$, $2P(i_t) \geq P(i_1)$.
\end{observation}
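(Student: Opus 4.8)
The plan is to leverage the defining property of the $1$-influential set. Since $i_t \in S^{\mathsf{inf.}}_1(G)$, in the graph $G' = (N, E\setminus E_{i_t})$ obtained from $G$ by deleting all of $i_t$'s out-edges, agent $i_t$ has rank $1$ (Definition~\ref{def:topset}), so her progeny in $G'$ is at least that of every other agent, in particular at least $i_1$'s progeny in $G'$. Writing $P'(\cdot)$ for progenies measured in $G'$, this gives $P'(i_t)\ge P'(i_1)$. First I would observe that deleting $i_t$'s out-edges does not shrink $i_t$'s own progeny: a path ending at $i_t$ in the DAG $G$ cannot use an out-edge of $i_t$ (otherwise it would have to return to $i_t$, creating a cycle), so every agent reaching $i_t$ in $G$ still reaches $i_t$ in $G'$, and deleting edges creates no new paths; hence $P(i_t,G) = P'(i_t) \ge P'(i_1)$.

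The second step is to control how much $i_1$'s progeny can shrink when passing from $G$ to $G'$. Since $G'$ is a subgraph of $G$ we have $P'(i_1)\subseteq P(i_1,G)$. I claim every agent $k$ in the difference $P(i_1,G)\setminus P'(i_1)$ already lies in $P(i_t,G)$: such a $k$ reaches $i_1$ in $G$ but every path from $k$ to $i_1$ uses a deleted edge $(i_t,\cdot)$, hence visits the vertex $i_t$; truncating such a path at $i_t$ exhibits a path from $k$ to $i_t$ in $G$, i.e.\ $k\in P(i_t,G)$. Therefore $|P(i_1,G)\setminus P'(i_1)|\le P(i_t,G)$.

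Putting the pieces together,
\[
P(i_1,G) \;=\; |P'(i_1)| + |P(i_1,G)\setminus P'(i_1)| \;\le\; P'(i_t) + P(i_t,G) \;=\; 2P(i_t,G),
\]
which is the desired inequality. The argument is elementary; the only real care is needed in the two acyclicity-driven set inclusions — that no path into $i_t$ traverses an out-edge of $i_t$ (so $i_t$'s progeny is unchanged by the deletion), and that every vertex lost from $i_1$'s progeny already lies in $i_t$'s progeny — and these are the steps I would write out most carefully. I would also remark that the same proof in fact gives $2P(i_t,G)\ge P(j,G)$ for \emph{every} agent $j$, so the bound is not special to the top-ranked agent $i_1$.
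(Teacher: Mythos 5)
Your proof is correct, and it is a careful formalization of exactly the argument the paper sketches informally (the observation is cited from \cite{zhang2021incentive}, and the paper only remarks that $i_t$ ``should hold at least half of $i_1^*$'s progeny to make herself rank first after removing her out-edges''). The two acyclicity facts you isolate --- that deleting $i_t$'s out-edges cannot shrink $P(i_t)$, and that every vertex lost from $P(i_1)$ must already lie in $P(i_t)$ --- are precisely the content behind that remark, and your strengthening to $2P(i_t)\ge P(j)$ for all $j$ is also valid.
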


Inferred from Observation~\ref{ob:1-like}, if there is another agent except for $i_1^*$ is in the 1-influential set, she should hold at least half of $i_1^*$'s progeny to make herself rank the first after removing her out-edges.

\section{Select One Agent}\label{section:3}
In this section, we present our result for only selecting $k=1$ agent. Not only does it help us understand the proposed methods in the following section for $k = 2$, but also fills the gap between the existing mechanisms and the upper bound of approximation ratios for IC 1-selection mechanisms, which is $1/(1+\ln 2)$ confirmed by~\citeauthor{zhang2021incentive}~\shortcite{zhang2021incentive}.

Formally, our method can be viewed as a general variant of the modified Babichenko's mechanism~\shortcite{babichenko2020incentive2}. 

\begin{framed}
 \noindent\textbf{$\beta$-logarithmic Mechanism ($\beta$-LM)}
 
 \noindent\rule{\textwidth}{0.5pt}
 \begin{enumerate}
     \item Given a network $G=(N,E)$, find the 1-influential set $S^{\mathsf{inf.}}_1(G)=\{i_{1},\dots,i_{m}\}$, where $i_{t} \succ i_{t+1}$ for all $1 \leq t < m$.
     \item Assign the probability of each agent to be selected as follows:
     \begin{align*}
         x_j = \begin{cases}
         \beta, & j = i_m \\
         (1-\beta) \log_2 \frac{P(i_t)}{P(i_{t+1})}, & j = i_t, t\neq m\\
         0, & j \notin S^{\mathsf{inf.}}_1(G).
         \end{cases}
     \end{align*}
 \end{enumerate}
\end{framed}

The total probability of selecting one agent in $\beta$-LM is at most $\beta + (1-\beta)\log_2(P(i_1)/P(i_m)) \leq 1$ by the fact in Observation~\ref{ob:1-ratio}. Hence, the probabilities assigned by the mechanism are valid as long as $0\leq \beta \leq 1$. Then, we show the mechanism is IC when $\beta \geq 1/2$.



\begin{theorem}\label{thm:lm-ic}
A $\beta$-logarithmic mechanism is IC if $\beta \geq 1/2$.
\end{theorem}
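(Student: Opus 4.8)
The plan is to reduce incentive compatibility to two easy invariances of the $1$-influential set under an agent's own manipulation. Fix an agent $i$ and two graphs $G=(N,E)$, $G'=(N,E')$ with $E\setminus E_i=E'\setminus E_i'$ and $E_i'\subseteq E_i$; I must show $x_i(G)\ge x_i(G')$. First I would observe that, by Definition~\ref{def:topset}, whether $i$ belongs to $S^{\mathsf{inf.}}_1$ is decided solely by the graph obtained by deleting \emph{all} of $i$'s out-edges, and that graph is literally the same for $G$ and $G'$; hence $i\in S^{\mathsf{inf.}}_1(G)\iff i\in S^{\mathsf{inf.}}_1(G')$. Second, since $G$ is a DAG, no directed path ending at $i$ can traverse an out-edge of $i$, so $P(i,G)=P(i,G')$. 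With these two facts the argument splits into three cases: (a) $i\notin S^{\mathsf{inf.}}_1(G)$, where $x_i(G)=x_i(G')=0$ and there is nothing to prove; (b) $i$ is the last agent of the chain $S^{\mathsf{inf.}}_1(G)=\{i_1,\dots,i_m\}$ of Observation~\ref{ob:1-like}, i.e. $i=i_m$; and (c) $i=i_s$ with $s<m$.

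For case (b) I would first establish a uniform ceiling: in every graph, every agent receives probability at most $\beta$. Indeed $x_{i_m}=\beta$, while for $t<m$ Observation~\ref{ob:1-ratio} gives $P(i_t)\le P(i_1)\le 2P(i_{t+1})$, so $x_{i_t}=(1-\beta)\log_2\frac{P(i_t)}{P(i_{t+1})}\le 1-\beta\le\beta$, the last inequality being exactly where $\beta\ge 1/2$ enters. Applying the ceiling to $G'$ yields $x_i(G')\le\beta=x_i(G)$.

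Case (c) is the heart of the proof, with $x_i(G)=(1-\beta)\log_2\frac{P(i_s,G)}{P(i_{s+1},G)}$. The key claim is that the agent $i_{s+1}$ sitting immediately below $i$ in $G$'s chain survives into $S^{\mathsf{inf.}}_1(G')$, still ranks below $i$ there, and keeps the same progeny $P(i_{s+1},G')=P(i_{s+1},G)$. This rests on $i_{s+1}\in P(i_s)\setminus\{i_s\}=P(i)\setminus\{i\}$: since $i_{s+1}$ reaches $i$, in the DAG $G$ (and in any subgraph of it) $i$ cannot reach $i_{s+1}$, so (i) no path to $i_{s+1}$ traverses an out-edge of $i$, whence deleting out-edges of $i$ changes neither $P(i_{s+1})$ nor the fact that $i_{s+1}$ reaches $i$; and (ii) passing from $(N,E\setminus E_{i_{s+1}})$ to $(N,E'\setminus E'_{i_{s+1}})$ only deletes further out-edges of $i$, which weakly decreases every progeny except $P(i_{s+1})$, so $i_{s+1}$ — top-ranked in the former because $i_{s+1}\in S^{\mathsf{inf.}}_1(G)$ — stays top-ranked in the latter, i.e. $i_{s+1}\in S^{\mathsf{inf.}}_1(G')$. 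Since $P(i,G')=P(i,G)=P(i_s,G)$ and $P(i_{s+1},G')=P(i_{s+1},G)$ while $i\succ i_{s+1}$ in $G$, the same order $i\succ i_{s+1}$ holds in $G'$. Writing $S^{\mathsf{inf.}}_1(G')=\{j_1,\dots,j_{m'}\}$ ordered by $\succ$ with $i=j_{s'}$, the presence of $i_{s+1}$ strictly below $i$ forces $s'<m'$, and by the chain structure $j_{s'+1}$ is the highest-ranked member of $S^{\mathsf{inf.}}_1(G')$ below $i$, so $j_{s'+1}\succeq i_{s+1}$ in $G'$ and $P(j_{s'+1},G')\ge P(i_{s+1},G')=P(i_{s+1},G)$. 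Because $s'<m'$, the mechanism applied to $G'$ gives
\begin{align*}
x_i(G') &= (1-\beta)\log_2\frac{P(i,G')}{P(j_{s'+1},G')}\\
 &= (1-\beta)\log_2\frac{P(i_s,G)}{P(j_{s'+1},G')}\\
 &\le (1-\beta)\log_2\frac{P(i_s,G)}{P(i_{s+1},G)}=x_i(G),
\end{align*}
which closes this case.

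I expect the main obstacle to be the structural claim in case (c): that $i_{s+1}$ really persists in $S^{\mathsf{inf.}}_1(G')$ with its progeny and its relative position to $i$ intact. The delicacy is entirely about monotonicity of the ranking under edge deletion, in particular about the strict tie-break in $\succ$ when arguing that "top-ranked" is preserved; once that is nailed down, the rest is bookkeeping together with monotonicity of $\log_2$. Note that $\beta\ge 1/2$ is used only in case (b), through $1-\beta\le\beta$; cases (a) and (c) hold for every $\beta\in[0,1]$.
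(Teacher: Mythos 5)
Your proof is correct and follows essentially the same three-case decomposition as the paper (agents outside $S^{\mathsf{inf.}}_1$, interior agents of the chain, and the last agent), using the same ingredients: invariance of $P(i)$ and of membership under $i$'s own edge-hiding, Observation~\ref{ob:1-like} to keep a successor of $i_t$ alive in the new chain, and Observation~\ref{ob:1-ratio} plus $\beta\geq 1/2$ to cap the last agent's post-deviation probability. Your case (c) is in fact slightly more careful than the paper's one-line claim that the interior agents' probabilities ``will not change'': you correctly argue only the weak inequality $x_i(G')\le x_i(G)$, via $P(j_{s'+1},G')\ge P(i_{s+1},G)$, which is all that incentive compatibility requires.
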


\begin{proof}
For any graph $G\in\mathcal{G}$, let $S^{\mathsf{inf.}}_1(G)=\{i_{1},\dots,i_{m}\}$. Then, we consider three different types of agents.
\begin{enumerate}
    \item For an agent $i\notin S^{\mathsf{inf.}}_1(G)$, by definition, she can never pretend to be the agent with rank 1 by hiding her out-edges. Hence, she will always not belong to the 1-influential set and will have 0 probability to be chosen.
    \item For an agent $i_t\in S^{\mathsf{inf.}}_1(G)$ such that $t<m$, no matter how she hides her out-edges, $i_{t+1}$ will always belong to the 1-influential set because $i_{t+1}\in P(i_t)$ (Observation~\ref{ob:1-like}) and her progeny cannot be decreased. Hence, the probability of $i_t$ to be chosen will not change.
    \item For the agent $i_m\in S^{\mathsf{inf.}}_1(G)$, if she hides some of her out-edges, there may happen two cases. (i) If there is no agent in $P(i_m)$ occurs in the new 1-influential set, 
    the probability of $i_m$ to be chosen will remain $\beta$. (ii) If there exists at least one agent in $P(i_m)$ that occurs in the new 1-influential set, let $i_1'$ be the first agent in the new set. Let $i_{m+1}$ be the one with the highest rank after $i_m$ in the new set. Then, the probability of $i_m$ to be chosen will become $(1-\beta) \log_2 (P(i_m)/P(i_{m+1})) \leq (1-\beta) \log_2 (P(i_1')/P(i_{m+1})) \leq (1-\beta) \leq \beta$ by the fact of $\beta \geq 1/2$ and Observation~\ref{ob:1-ratio}.
\end{enumerate}
Taking all the above together, no agent can increase her probability to be chosen by hiding her out-edges. 
\end{proof}

Now we can compute the approximation ratios of IC $\beta$-LMs, from which we can find that the optimal $\beta$-LM is also an optimal IC selection mechanism for $k=1$.

\begin{theorem}\label{thm:LM-opt}
An IC $\beta$-logarithmic mechanism ($1/2\leq \beta \leq 1$) is $\left(\min\left\{ \frac{1}{2}\left( \beta + \frac{1-\beta}{\ln 2} \right), \beta \right\}\right)$-optimal.
\end{theorem}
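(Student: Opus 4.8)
The plan is to lower-bound the ratio $\frac{\mathbb{E}[P(\text{selected})]}{P(i_1^*)}$ over all DAGs, where by Observation~\ref{ob:1-like} we may write $S^{\mathsf{inf.}}_1(G) = \{i_1,\dots,i_m\}$ with $i_1 = i_1^*$ and $P(i_1) \geq P(i_2) \geq \cdots \geq P(i_m)$, and by Observation~\ref{ob:1-ratio} we have $2P(i_m) \geq P(i_1)$. Since $\beta$-LM only selects agents in $S^{\mathsf{inf.}}_1(G)$, the expected progeny of the selected agent is exactly
\[
  \mathbb{E}[P] = \beta\, P(i_m) + (1-\beta)\sum_{t=1}^{m-1} P(i_t)\log_2\frac{P(i_t)}{P(i_{t+1})}.
\]
Normalizing by $P(i_1)$, I would set $r_t = P(i_t)/P(i_1) \in [1/2, 1]$ (so $r_1 = 1$), and minimize
\[
  g(r_1,\dots,r_m) = \beta r_m + (1-\beta)\sum_{t=1}^{m-1} r_t \log_2\frac{r_t}{r_{t+1}}
\]
subject to $1 = r_1 \geq r_2 \geq \cdots \geq r_m \geq 1/2$, over all $m \geq 1$.

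First I would handle the case $m = 1$ (i.e.\ only $i_1^*$ is influential): then $\mathbb{E}[P] = \beta P(i_1)$, giving ratio exactly $\beta$; this is where the second term of the $\min$ comes from. For $m \geq 2$, the key observation is that the sum $\sum_{t} r_t \log_2(r_t/r_{t+1})$ is, up to the factor $1/\ln 2$, a Riemann-type lower sum: since $r_t \geq r_{t+1}$, each term $r_t \ln(r_t/r_{t+1}) \geq \int_{r_{t+1}}^{r_t} \ln\frac{?}{?}$... more precisely I would use the elementary inequality $r_t\ln\frac{r_t}{r_{t+1}} \geq \int_{r_{t+1}}^{r_t} \ln\!\big(\tfrac{1}{?}\big)$ — the cleanest route is to note $x\ln\frac{x}{y} \ge x - y$ is the wrong direction, so instead bound the whole telescoping-like sum directly. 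Concretely, I expect that the minimum of $g$ for fixed $r_m$ is attained by making the chain as "spread out" as possible, and a convexity/smoothing argument shows the worst case is the extreme two-point configuration $r_1 = 1$, $r_m = 1/2$ with the intermediate structure contributing its minimal possible value; evaluating $\sum r_t\log_2(r_t/r_{t+1})$ in the limit of a fine partition of $[1/2,1]$ gives $\int_{1/2}^1 \frac{dx}{\ln 2} \cdot \frac{1}{x}\cdot x = \frac{1}{\ln 2}\int_{1/2}^{1} dx$... I would carefully carry out $\int_{1/2}^1 \log_2 e \, dx$-style computation to get $\frac{1}{2\ln 2}$, yielding $\mathbb{E}[P]/P(i_1) \to \frac{\beta}{2} + \frac{1-\beta}{2\ln 2} = \frac{1}{2}\big(\beta + \frac{1-\beta}{\ln 2}\big)$, the first term of the $\min$. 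Then I would check this limiting configuration is genuinely approached by valid DAGs (a long directed path with appropriately sized "bushes" hanging off), so the infimum is not improved.

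The main obstacle I anticipate is the optimization / smoothing step: showing rigorously that among all admissible decreasing chains $1 = r_1 \geq \cdots \geq r_m \geq 1/2$ the quantity $g$ is minimized in the limit by a uniformly refined chain spanning the full interval $[1/2,1]$. I would argue this by a local perturbation argument — fixing all $r_t$ except one interior $r_s$ and showing $g$ is concave (or monotone toward the boundary) in $r_s$, so that the minimum pushes $r_s$ to a boundary of its feasible interval $[r_{s+1}, r_{s-1}]$, i.e.\ merges adjacent points or splits them apart; iterating drives the configuration to the continuous extreme. Combined with the $m=1$ case and the matching-instance construction, this yields exactly $\min\{\frac{1}{2}(\beta + \frac{1-\beta}{\ln 2}),\, \beta\}$-optimality. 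Finally I would note that at $\beta$ minimizing over the IC range, namely where $\beta = \frac{1}{2}(\beta + \frac{1-\beta}{\ln 2})$, i.e.\ $\beta = \frac{1}{1+\ln 2}$, both terms coincide and equal $\frac{1}{1+\ln 2}$, matching the upper bound of \citeauthor{zhang2021incentive}~\shortcite{zhang2021incentive} and proving optimality of the best $\beta$-LM.
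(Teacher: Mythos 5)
Your setup (normalize by $P(i_1)$, treat the $m=1$ case separately, minimize $g(r_1,\dots,r_m)=\beta r_m+(1-\beta)\sum_t r_t\log_2(r_t/r_{t+1})$ over decreasing chains with $r_m\ge 1/2$) is exactly the paper's framing, and your identification of the worst case is essentially right. But the proposal has a genuine gap at its central step: you never actually establish the lower bound on $\sum_t r_t\log_2(r_t/r_{t+1})$, deferring it to a ``convexity/smoothing argument'' that you yourself flag as the main obstacle and do not carry out. Worse, you explicitly considered the inequality $x\ln\frac{x}{y}\ge x-y$ and dismissed it as ``the wrong direction'' --- it is in fact the correct direction (for $x\ge y>0$, $\ln(x/y)\ge 1-y/x$, hence $x\ln(x/y)\ge x-y$), and it is precisely the tool that finishes the proof in one line: summing over $t$ telescopes to
\begin{equation*}
\sum_{t=1}^{m-1} r_t\log_2\frac{r_t}{r_{t+1}}\;\ge\;\frac{1}{\ln 2}\sum_{t=1}^{m-1}(r_t-r_{t+1})\;=\;\frac{1-r_m}{\ln 2},
\end{equation*}
so $g\ge \frac{1-\beta}{\ln 2}+\bigl(\beta-\frac{1-\beta}{\ln 2}\bigr)r_m$, and minimizing this linear function over $r_m\in[1/2,1]$ yields exactly $\min\bigl\{\frac{1}{2}\bigl(\beta+\frac{1-\beta}{\ln 2}\bigr),\,\beta\bigr\}$. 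The paper does this same computation, phrased as $P(i_t)\int_{P(i_{t+1})}^{P(i_t)}\frac{\mathrm{d}z}{z}\ge\int_{P(i_{t+1})}^{P(i_t)}\mathrm{d}z$. Two smaller points: (i) your claim that the $m\ge 2$ worst case is always the refined chain down to $r_m=1/2$ is only correct when $\beta\ge 1/(1+\ln 2)$; for $\beta<1/(1+\ln 2)$ the coefficient $\beta-\frac{1-\beta}{\ln 2}$ is negative and the infimum for $m\ge 2$ is $\beta$ (approached when all progenies are nearly equal) --- harmless for the stated $\min$, but your description of the minimizer is wrong on part of the range; (ii) the tightness construction you propose is not needed, since $\alpha$-optimality only requires the infimum to be at least $\alpha$.
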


\begin{proof}
For any graph $G\in\mathcal{G}$, let $S^{\mathsf{inf.}}_1(G)=\{i_{1},\dots,i_{m}\}$. If $m=1$, then we have
\[ \mathbb{E}_{i\sim x_i}[P(i)] / P(i_1^*) = x_{i_1} P(i_1) / P(i_1) = \beta. \]
If $m>1$, then we have
\begin{align*}
    &\mathbb{E}_{i\sim x_i}[P(i)] / P(i_1^*) \\
    = &\ \frac{1-\beta}{P(i_1^*)} \sum_{t=1}^{m-1}P(i_{t})\log_{2}\frac{P(i_t)}{P(i_{t+1})} + \beta \frac{P(i_m)}{P(i_1^*)} \\
    = &\ \frac{1-\beta}{\ln{2}\cdot P(i_1^*)} \sum_{t=1}^{m-1}P(i_{t})\int_{P(i_{t+1})}^{P(i_{t})}\frac{\mathrm{d}z}{z} + \beta \frac{P(i_m)}{P(i_1^*)} \\
    \geq &\ \frac{1-\beta}{\ln{2}\cdot P(i_1^*)} \sum_{t=1}^{m-1}\int_{P(i_{t+1})}^{P(i_{t})}\mathrm{d}z + \beta \frac{P(i_m)}{P(i_1^*)} \\
    = &\ \frac{1-\beta}{\ln{2}\cdot P(i_1)} \sum_{t=1}^{m-1}(P(i_{t})-P(i_{t+1})) + \beta \frac{P(i_m)}{P(i_1)} \\
    = &\ \frac{1-\beta}{\ln 2} + \left( \beta - \frac{1-\beta}{\ln 2} \right) \frac{P(i_m)}{P(i_1)} \\
    \geq &\ \frac{1}{2}\left( \beta + \frac{1-\beta}{\ln 2} \right).
\end{align*}

Therefore, the mechanism is $\left(\min\left\{ \frac{1}{2}\left( \beta + \frac{1-\beta}{\ln 2} \right), \beta \right\}\right)$-optimal.
\end{proof}

It is not hard to find out that when $\beta = 1/(1 + \ln 2)$, the value $\min\left\{ \frac{1}{2}\left( \beta + \frac{1-\beta}{\ln 2} \right), \beta \right\}$ takes it maximum as $1/(1 + \ln 2)$, i.e., the optimal $\beta$-LM is $(1/(1 + \ln 2))$-LM, which is $(1/(1 + \ln 2))$-optimal. Recall that \citeauthor{zhang2021incentive}~\shortcite{zhang2021incentive} has proved that no IC and fair\footnote{Fairness is a quite weak property for single agent selection that only requires $i_1^*$ has the same probability to be chosen if the 1-influential set and the structure formed by $P(i_1^*)$ remain the same. It is clear to see that $\beta$-LM satisfies this fairness.} selection mechanism can be $\alpha$-optimal with $\alpha > 1/(1+\ln 2)$. Hence, we can infer the optimality of $(1/(1 + \ln 2))$-LM.

\begin{corollary}
There is no other IC and fair selection mechanism for $k=1$ that can have a higher approximation ratio than $(1/(1 + \ln 2))$-LM.
\end{corollary}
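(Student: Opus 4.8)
The plan is to obtain the corollary as an immediate consequence of Theorem~\ref{thm:LM-opt} together with the matching upper bound of \citeauthor{zhang2021incentive}~\shortcite{zhang2021incentive}. First I would check that $(1/(1+\ln 2))$-LM actually belongs to the class of mechanisms to which the upper bound applies. Since $1/(1+\ln 2) \approx 0.59 \geq 1/2$, Theorem~\ref{thm:lm-ic} gives that this instance of $\beta$-LM is IC; and, as remarked in the footnote, $\beta$-LM assigns $i_1^*$ a probability that depends only on the $1$-influential set and the structure formed by $P(i_1^*)$, so it is fair in the required sense. Hence $(1/(1+\ln 2))$-LM is an IC and fair $1$-selection mechanism.

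Next I would evaluate its approximation ratio. By Theorem~\ref{thm:LM-opt}, a $\beta$-LM with $1/2 \leq \beta \leq 1$ is $\bigl(\min\{\tfrac{1}{2}(\beta + \tfrac{1-\beta}{\ln 2}),\, \beta\}\bigr)$-optimal. A short computation shows that $\tfrac{1}{2}(\beta + \tfrac{1-\beta}{\ln 2}) = \beta$ is equivalent to $\beta(1+\ln 2) = 1$, i.e.\ $\beta = 1/(1+\ln 2)$; at this value both arguments of the $\min$ equal $1/(1+\ln 2)$, so $(1/(1+\ln 2))$-LM is exactly $(1/(1+\ln 2))$-optimal. (One could also note that $\beta \mapsto \min\{\tfrac{1}{2}(\beta + \tfrac{1-\beta}{\ln 2}), \beta\}$ is increasing before the crossover and decreasing after it, so $1/(1+\ln 2)$ is in fact the best ratio achievable within the $\beta$-LM family.)

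Finally I would invoke the known upper bound: no IC and fair $1$-selection mechanism is $\alpha$-optimal for any $\alpha > 1/(1+\ln 2)$. Combining the two facts, every IC and fair $1$-selection mechanism has approximation ratio at most $1/(1+\ln 2)$, which is precisely the ratio attained by $(1/(1+\ln 2))$-LM; therefore no IC and fair mechanism can strictly exceed it, proving the corollary.

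There is essentially no real obstacle here, since the statement is a one-line corollary of an achievability result (Theorem~\ref{thm:LM-opt}) and a pre-existing matching impossibility result. The only points that deserve care are verifying membership of $(1/(1+\ln 2))$-LM in the relevant class (IC through the hypothesis $\beta \geq 1/2$, and fairness through the footnote observation) and confirming that the $\min$ expression in Theorem~\ref{thm:LM-opt} genuinely equals $1/(1+\ln 2)$ at $\beta = 1/(1+\ln 2)$ rather than some smaller quantity; both are quick algebraic checks.
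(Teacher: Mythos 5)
Your proposal is correct and follows essentially the same route as the paper: verify that $(1/(1+\ln 2))$-LM is IC (Theorem~\ref{thm:lm-ic}, since $\beta \geq 1/2$) and fair, show via Theorem~\ref{thm:LM-opt} that both arguments of the $\min$ coincide at $\beta = 1/(1+\ln 2)$ so the mechanism is exactly $(1/(1+\ln 2))$-optimal, and then invoke the matching upper bound of \citeauthor{zhang2021incentive}~\shortcite{zhang2021incentive} for IC and fair $1$-selection mechanisms. Your additional checks (fairness membership and the monotonicity of the $\min$ expression around the crossover) are sound and only make the argument slightly more explicit than the paper's.
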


At the end of this section, we give a running example of $(1/(1 + \ln 2))$-LM.

\begin{example}\label{eg1}
Consider the network depicted in Figure~\ref{fig:top1selection}, where $S^{\mathsf{inf.}}_1(G)=\{i_{1}, i_{2}, i_3, i_4\}$. 
For the last agent $i_4$ in the set, her selection probability $x_{i_4}$ is $1/(1+\ln2)\approx0.59$. For the agents $i_3$, $i_2$ and $i_1$, their selection probabilities are

\begin{align*}
    x_{i_3}&=\frac{\ln2}{1+\ln2}\log_2\frac{P(i_3)}{P(i_4)}=\frac{\ln2}{1+\ln2}\log_2\frac{5}{4}\approx0.13; \\
    x_{i_2}&=\frac{\ln2}{1+\ln2}\log_2\frac{P(i_2)}{P(i_3)}=\frac{\ln2}{1+\ln2}\log_2\frac{6}{5}\approx0.11; \\
    x_{i_1}&=\frac{\ln2}{1+\ln2}\log_2\frac{P(i_1)}{P(i_2)}=\frac{\ln2}{1+\ln2}\log_2\frac{7}{6}\approx0.09.
\end{align*}

\end{example}

\begin{figure}[!htbp]
    \centering
    \includegraphics[width=0.48\textwidth]{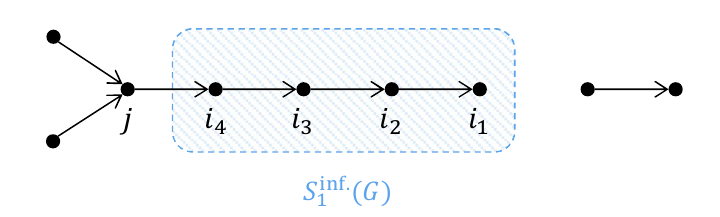}
    \caption{An example of the network, where the marked agents have the relationship as $i_1\succ i_2 \succ i_3 \succ i_4 \succ j$. The 1-influential set is represented by a dashed border.}
    \label{fig:top1selection}
\end{figure}

\section{Select Two Agents}
We start to consider selecting up to $k=2$ agents as delegates.
To select agents with larger progeny, one possible
approach is to find the second delegate from the 1-influential set as well. However, this limits the performance of IC mechanisms.

\subsection{Limitation of the 1-influential Set}
The limitation of selecting agents from the 1-influential set for $k=2$ mainly comes from the fact that there may be only a single agent in the set.

\begin{theorem}\label{thm:limit}
If an IC 2-selection mechanism only selects agents in the 1-influential set, then it cannot be $\alpha$-optimal with $\alpha > 1/2$.
\end{theorem}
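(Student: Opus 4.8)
The plan is to exhibit a separating example: a DAG $G$ on which the $1$-influential set is a single agent, so that any mechanism confined to it can select at most that one agent (and recover only her progeny), while the optimal $2$-set has essentially twice that progeny. Plugging such a $G$ into the definition of $\alpha$-optimality (which is an infimum over all DAGs) then forces $\alpha\le 1/2$.

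Concretely, for a large integer $n$ I would take $G$ to be the disjoint union of two in-stars: a center $a$ with $n-1$ leaf-followers all pointing to $a$, and a center $b$ with $n-1$ leaf-followers all pointing to $b$, with no edges across the two components and no out-edges from $a$ or $b$; fix the node labels so that $a$ has the smaller index. Then $P(a)=P(b)=n$, every leaf has progeny $1$, and the ranking sequence is $a\succ b\succ(\text{the leaves})$. The second step is to verify $S^{\mathsf{inf.}}_1(G)=\{a\}$: the agent $a$ qualifies since she is already rank $1$; no leaf qualifies, since deleting its unique out-edge leaves it with progeny $1$; and $b$ does not qualify either, because $E_b=\emptyset$, so the graph $G'=(N,E\setminus E_b)$ of Definition~\ref{def:topset} equals $G$, in which $b$ still has rank $2$. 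Equivalently, by Observation~\ref{ob:1-like} any hypothetical second member of $S^{\mathsf{inf.}}_1(G)$ would have to lie in $P(a)\setminus\{a\}$, i.e.\ be a leaf, which Observation~\ref{ob:1-ratio} rules out.

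The third step reads off the bound. Since the mechanism selects only agents of $S^{\mathsf{inf.}}_1(G)=\{a\}$, all its probability mass sits on $\emptyset$ and $\{a\}$, so its expected total progeny on $G$ is at most $P(a)=n$; meanwhile $\{a,b\}$ is an optimal $2$-set with total progeny $2n$, so the ratio on $G$ is at most $1/2$, and hence $\alpha\le 1/2$. If one prefers to avoid relying on the index tie-break, give $b$ only $n-2$ followers; then $b$ lies strictly below $a$ regardless of labels, the ratio on $G$ becomes $n/(2n-1)$, and letting $n\to\infty$ again drives $\alpha$ down to $1/2$.

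I do not anticipate a real obstacle, as this is just a well-chosen instance; the one point needing care is the verification that $b\notin S^{\mathsf{inf.}}_1(G)$ — that even after hiding-optimization $b$ cannot overtake $a$ — which is precisely why $b$ is not given strictly more followers than $a$. Note also that incentive compatibility of the mechanism is never used: the bound follows purely from the stated restriction to the $1$-influential set.
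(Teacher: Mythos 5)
Your proposal is correct and is essentially the paper's own proof: both exhibit a two-star graph with equal-progeny hubs, observe that the 1-influential set is then the singleton containing only the higher-ranked hub, and conclude that any mechanism confined to that set achieves ratio at most $1/2$. One small slip: the paper's tie-break favors the \emph{larger} index ($i\succ j$ when $P(i)=P(j)$ and $i>j$), so to get $a\succ b$ you should give $a$ the larger index rather than the smaller --- though your strict-inequality variant with $n-2$ followers for $b$ sidesteps this entirely.
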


\begin{proof}
Consider a two-star graph shown in Figure~\ref{fig:twostar}. Suppose $P(i_1) = P(j) = y$ and $i_1 > j$. Then the 1-influential set of this graph $S^{\mathsf{inf.}}_1(G)$ only contains $i_1$. 
Therefore, even if the mechanism can always select agent $i_1$ with probability $1$, the approximation ratio in this graph is only $(1\cdot y)/(y+y) = 1/2$. Hence, if only selecting agents in the 1-influential set, an IC 2-selection mechanism cannot achieve an approximation ratio of more than $1/2$.
\end{proof}

\begin{figure}[!htbp]
    \centering
    \includegraphics[width=0.48\textwidth]{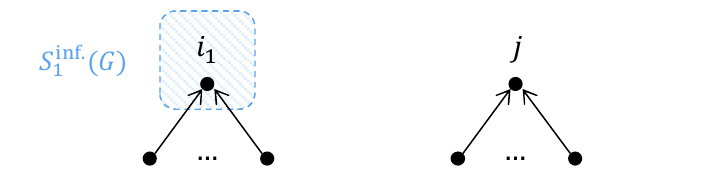}
    \caption{A two-star network where the hubs satisfy $i_1\succ j$. Thus, the 1-influential set only contains $i_1$.}
    \label{fig:twostar}
\end{figure}

We can also show that the limitation described in Theorem~\ref{thm:limit} is tight by providing the following mechanism.

\begin{framed}
 \noindent\textbf{Least Deterministic Mechanism (LDM)}
 
 \noindent\rule{\textwidth}{0.5pt}
 \begin{enumerate}
     \item Given a network $G=(N,E)$, find the 1-influential set $S^{\mathsf{inf.}}_1(G)=\{i_{1},\dots,i_{m}\}$, where $i_{t} \succ i_{t+1}$ for all $1 \leq t < m$.
     \item Assign the probability of each agent to be selected as follows: 
     \begin{align*}
         x_j = \begin{cases}
         1, & j = i_m, \text{or } j = i_{m-1} \\
         0, & j = i_t, t < m-1, \text{or } j \notin S^{\mathsf{inf.}}_1(G).\\
         \end{cases}
     \end{align*}
 \end{enumerate}
\end{framed}

Intuitively, LDM deterministically selects the last two agents in the 1-influential set or it selects the only agent in the set if $|S^{\mathsf{inf.}}_1(G)| = 1$.

\begin{example}
We take the networks shown in Figure~\ref{fig:top1selection} and Figure~\ref{fig:twostar} as running examples. In Figure~\ref{fig:top1selection}, there are four agents, $i_1$, $i_2$, $i_3$, and $i_4$, in the 1-influential set. Hence, LDM deterministically selects the last two agents $i_4$ and $i_3$, i.e., $x_{i_4}=x_{i_3}=1$. In Figure~\ref{fig:twostar}, there is a single agent $i_1$ in the 1-influential set. Hence, LDM deterministically selects the agent $i_1$ only.
\end{example}

Now we prove that LDM is incentive-compatible and $1/2$-optimal as follows.

\begin{theorem}
LDM is an IC 2-selection mechanism.
\end{theorem}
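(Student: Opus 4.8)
The plan is to mimic the case analysis used in the proof of Theorem~\ref{thm:lm-ic}, since LDM selects only from the $1$-influential set and, like $\beta$-LM, its selection probabilities depend only on the ordered structure of $S^{\mathsf{inf.}}_1(G)$. Fix a graph $G$ with $S^{\mathsf{inf.}}_1(G) = \{i_1, \dots, i_m\}$, ordered by $\succ$, and consider an arbitrary agent and the effect of her hiding some out-edges. First I would dispose of agents $i \notin S^{\mathsf{inf.}}_1(G)$: by Definition~\ref{def:topset} such an agent cannot reach rank $1$ even after deleting all her out-edges, and hiding a subset of them is even weaker, so she stays outside the $1$-influential set and keeps probability $0$. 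Next, for $i_t$ with $t \le m-2$: by Observation~\ref{ob:1-like}, $i_{t+1} \in P(i_t)$ and $i_{t+2} \in P(i_t)$, and hiding out-edges of $i_t$ cannot remove agents from her progeny, so both $i_{t+1}$ and $i_{t+2}$ remain in the new $1$-influential set and still rank strictly below $i_t$; hence at least two agents rank below $i_t$ in the new set, so $i_t$ is neither the last nor the second-to-last agent and her probability stays $0$.

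The substantive cases are $i_{m-1}$ and $i_m$, whose probability is already $1$ — so I only need to check that hiding edges cannot \emph{increase} it, which is automatic, but more importantly these agents are the ones who might gain by manipulation if the mechanism were designed differently, so I should confirm there is nothing to gain (their probability is capped at $1$). The real content is showing that the \emph{other} two types above cannot sneak into the ``last two'' slots; I have handled $i_t$ for $t \le m-2$ and $i \notin S^{\mathsf{inf.}}_1(G)$. It remains only to note $x_{i_{m-1}} = x_{i_m} = 1$ cannot go up. Actually the one subtlety worth spelling out: when $m = 1$, LDM selects only $i_1 = i_1^*$ with probability $1$, and when $m = 2$ it selects both agents each with probability $1$; these boundary cases need a sentence so that ``$i_{m-1}$'' and ``$i_{m-2}$'' are interpreted correctly (e.g.\ for $m=1$ there is no $i_{m-1}$, and for $m \le 2$ the set of agents with $t \le m-2$ is empty, so the second case is vacuous).

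The main obstacle — really the only thing requiring care — is the argument for $i_t$ with $t \le m-2$: I must argue not just that $i_{t+1}, i_{t+2}$ remain in the progeny but that they remain in the \emph{new} $1$-influential set $S^{\mathsf{inf.}}_1(G')$ after $i_t$'s deviation. This needs the observation that deleting $i_t$'s out-edges can only increase every agent's progeny and hence can only help an agent qualify for the influential set; combined with the fact that $i_{t+1}$ and $i_{t+2}$ already qualified in $G$ (they were in $S^{\mathsf{inf.}}_1(G)$), and that they are still ranked below $i_t$ in $G'$ (their ranks relative to $i_t$ can only drop further since $i_t$'s progeny grew), we get at least two agents strictly below $i_t$ in $S^{\mathsf{inf.}}_1(G')$. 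I would state this cleanly as a short claim and then conclude. Putting the pieces together:

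\begin{proof}
Let $G \in \mathcal{G}$ with $S^{\mathsf{inf.}}_1(G) = \{i_1, \dots, i_m\}$ ordered so that $i_t \succ i_{t+1}$. Suppose an agent $i$ hides some out-edges, yielding $G'$, and let $S^{\mathsf{inf.}}_1(G') = \{i_1', \dots, i_{m'}'\}$. Note that deleting out-edges only weakly increases the progeny of every agent, so any agent in $S^{\mathsf{inf.}}_1(G)$ who is still ranked appropriately low will remain in $S^{\mathsf{inf.}}_1(G')$.

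If $i \notin S^{\mathsf{inf.}}_1(G)$, then by Definition~\ref{def:topset} even deleting all of $i$'s out-edges leaves $i$ ranked below position $1$; hiding only a subset of them is even less favorable for $i$, so $i \notin S^{\mathsf{inf.}}_1(G')$ and $x_i(G') = 0 = x_i(G)$.

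If $i = i_t$ with $t \le m-2$ (so $m \ge 3$ and $i_{t+1}, i_{t+2}$ exist), then by Observation~\ref{ob:1-like} both $i_{t+1}, i_{t+2} \in P(i_t, G)$, so hiding $i_t$'s out-edges cannot remove them from $i_t$'s progeny; moreover, since $i_t$'s progeny can only grow and the progeny of $i_{t+1}$ and $i_{t+2}$ are unchanged, both still rank strictly below $i_t$ in $G'$, and both remain in $S^{\mathsf{inf.}}_1(G')$. Hence at least two agents rank below $i_t$ in $S^{\mathsf{inf.}}_1(G')$, so $i_t$ is neither the last nor the second-to-last agent of $S^{\mathsf{inf.}}_1(G')$, and $x_{i_t}(G') = 0 = x_{i_t}(G)$.

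If $i \in \{i_{m-1}, i_m\}$, then $x_i(G) = 1$ already, so trivially $x_i(G') \le 1 = x_i(G)$.

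When $m \le 2$, the second case above is vacuous, and LDM selects $i_m$ (and $i_{m-1}$ if $m = 2$) with probability $1$; the first and third cases still apply verbatim. In all cases, no agent can increase her selection probability by hiding out-edges, so LDM is IC.
\end{proof}
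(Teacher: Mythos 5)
Your proof follows essentially the same route as the paper's: agents outside $S^{\mathsf{inf.}}_1(G)$ can never enter it by hiding edges, the last two agents already receive probability $1$, and any earlier $i_t$ still has at least two members of the new $1$-influential set ranked below her after any manipulation (the paper exhibits $i_m,i_{m-1}\in P(i_t)$ where you exhibit $i_{t+1},i_{t+2}$; both work). One correction is needed, though: your auxiliary claim that ``deleting out-edges only weakly increases the progeny of every agent'' is backwards --- deleting edges can only destroy paths, so every $P(v)$ weakly \emph{decreases}. What actually makes the step go through (and what the paper states) is that $i_t$'s deletion leaves $P(i_{t+1})$ and $P(i_{t+2})$ \emph{unchanged}: since $i_{t+1}\in P(i_t)$ in a DAG, no path ending at $i_{t+1}$ can use an out-edge of $i_t$ (that would close a cycle with the path from $i_{t+1}$ to $i_t$), and the same holds after $i_{t+1}$'s own hypothetical deletion, while every other agent's progeny weakly decreases; this is precisely what keeps $i_{t+1}$ and $i_{t+2}$ in $S^{\mathsf{inf.}}_1(G')$ and ranked below $i_t$. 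With that one sentence corrected, the argument is sound and matches the paper's.
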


\begin{proof}
For any graph $G\in \mathcal{G}$, suppose that $S^{\mathsf{inf.}}_1(G)=\{i_{1},\dots,i_{m}\}$. Then we consider the following cases. 
\begin{enumerate}
    \item For an agent $i\notin S^{\mathsf{inf.}}_1(G)$, same as the first point in the proof of Theorem~\ref{thm:lm-ic}, she always has no chance to be chosen by hiding her out-edges.
    \item If $m\leq 2$, the agents in $S^{\mathsf{inf.}}_1(G)$ will be deterministically selected. Hence, they have no incentive to hide their out-edges.
    \item If $m > 2$, first, for agents $i_m$ and $i_{m-1}$ who will be deterministically selected, they have no incentive to hide their out-edges. Then, for any agent $i_t\in S^{\mathsf{inf.}}_1(G)$ with $i<m-1$, no matter how she hides her out-edges, $i_{m}$ and $i_{m-1}$ will always belong to the 1-influential set because $\{i_m, i_{m-1}\}\subseteq P(i_t)$ (Observation~\ref{ob:1-like}) and their progeny cannot be decreased. Hence, the probability of $i_t$ to be chosen will remain to be 0.
\end{enumerate}
Taking all the above together, no agent can increase her probability to be chosen by hiding her out-edges. Therefore, the mechanism is IC.
\end{proof}

\begin{theorem}
LDM is $1/2$-optimal.
\end{theorem}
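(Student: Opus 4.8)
The plan is to show that for every graph $G$, the expected progeny of the set selected by LDM is at least $\frac{1}{2}\sum_{i\in S_2^*}P(i,G)$. I would split into the same two regimes as in the IC proof, according to the size $m = |S^{\mathsf{inf.}}_1(G)|$. The key structural tools are Observation~\ref{ob:1-like} (the agents of the 1-influential set lie on a directed path in rank order, so $P(i_1^*) = P(i_1)$) and Observation~\ref{ob:1-ratio} (if $m>1$ then $2P(i_t)\ge P(i_1)$ for every $t\ge 2$). Throughout, $\{i_1^*,i_2^*\}$ is an optimal 2-set, so the denominator is $P(i_1^*)+P(i_2^*)$, and $P(i_1^*) = P(i_1) \ge P(i_2^*)$.

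For the case $m=1$: LDM selects only $i_1 = i_1^*$, so the ratio is $P(i_1^*)/(P(i_1^*)+P(i_2^*))$. Since $P(i_2^*)\le P(i_1^*)$, this is at least $1/2$, which handles this regime. (The two-star graph of Figure~\ref{fig:twostar} shows this is tight.)

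For the case $m\ge 2$: LDM deterministically selects $i_{m-1}$ and $i_m$, both of which are in $S^{\mathsf{inf.}}_1(G)$, so by Observation~\ref{ob:1-ratio} we have $P(i_{m-1})\ge P(i_1)/2$ and $P(i_m)\ge P(i_1)/2$ (when $m=2$, $i_{m-1}=i_1$ and $P(i_1)\ge P(i_1)/2$ trivially). Hence the numerator is $P(i_{m-1})+P(i_m)\ge P(i_1) = P(i_1^*)$. The denominator is $P(i_1^*)+P(i_2^*)\le 2P(i_1^*)$ since $i_1^*$ has the largest progeny. Therefore the ratio is at least $P(i_1^*)/(2P(i_1^*)) = 1/2$, as required. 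Combining the two regimes gives that LDM is $1/2$-optimal; the two-star example certifies that the bound cannot be improved, matching Theorem~\ref{thm:limit}.

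The only mild subtlety — and the one place I would be careful — is the boundary case $m=2$, where $i_{m-1}$ coincides with $i_1$ and one should not invoke Observation~\ref{ob:1-ratio} for it (that observation is stated for indices $t>1$); there the bound $P(i_1)\ge P(i_1)/2$ is used instead, and $P(i_2)\ge P(i_1)/2$ still follows from Observation~\ref{ob:1-ratio}. Everything else is a direct two-line estimate, so there is no real obstacle beyond bookkeeping the edge cases.
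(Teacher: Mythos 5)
Your proposal is correct and follows essentially the same route as the paper: split on $|S^{\mathsf{inf.}}_1(G)|=1$ versus $\geq 2$, use $P(i_2^*)\leq P(i_1^*)$ to bound the denominator, and apply Observation~\ref{ob:1-ratio} to the two deterministically selected agents. Your extra remark on the $m=2$ boundary (where $i_{m-1}=i_1$ and the trivial bound $P(i_1)\geq P(i_1)/2$ suffices) is a small point of care that the paper glosses over, but it does not change the argument.
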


\begin{proof}
For any graph $G\in\mathcal{G}$, if $|S^{\mathsf{inf.}}_1(G)|=1$, then LDM deterministically selects the agent $i_1^*$. Therefore, in this case, we have
\[ \frac{\mathbb{E}_{S}[\sum_{i\in S} P(i)]}{\sum_{i\in S_2^*} P(i)} = \frac{P(i_1^*)}{P(i_1^*) + P(i_2^*)} \geq \frac{P(i_1^*)}{P(i_1^*) + P(i_1^*)} = \frac{1}{2}. \]
If $|S^{\mathsf{inf.}}_1(G)|\geq 2$, then LDM deterministically selects the agent $i_m$ and $i_{m-1}$. By Observation~\ref{ob:1-ratio}, in this case, we have
\begin{align*}
    \frac{\mathbb{E}_{S}[\sum_{i\in S} P(i)]}{\sum_{i\in S_2^*} P(i)} & = \frac{P(i_m)+P(i_{m-1})}{P(i_1^*) + P(i_2^*)} \\ 
    & \geq \frac{P(i_m)+P(i_{m-1})}{2P(i_1)} \\
    & = \frac{1}{2}\left( \frac{P(i_m)}{P(i_1)} + \frac{P(i_{m-1})}{P(i_1)} \right) \\
    & \geq \frac{1}{2}\left( \frac{1}{2} + \frac{1}{2} \right) = \frac{1}{2}.
\end{align*}
Therefore, LDM is $1/2$-optimal.
\end{proof}

If we consider general IC 2-selection mechanisms, we may have a higher upper bound of the approximate ratio. This suggests the limitation of only selecting agents from the 1-influential set when we target up to two delegates.

\begin{theorem}
There is no IC 2-selection mechanism that can be $\alpha$-optimal with $\alpha > 23/27$.
\end{theorem}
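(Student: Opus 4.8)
The plan is to exhibit a single hard instance (or a small family of instances) on which \emph{every} IC $2$-selection mechanism must lose at least a $4/27$ fraction of the optimal influential power, and to do this via a linear-programming / averaging argument over a symmetric collection of manipulations. The natural candidate is a graph built from three ``branches'' of roughly equal progeny attached so that, by hiding out-edges, three different agents can each make themselves look like one of the top two; symmetry among the three branches then forces the mechanism to spread probability thinly. Concretely, I would look for a base DAG $G$ together with three agents $a_1,a_2,a_3$ such that for each $\ell$ there is a reported graph $G_\ell$ obtained by some $a_j$ hiding out-edges where $a_\ell$ becomes $i_1^*$ (or $i_2^*$), while the optimal pair in $G$ has total progeny $S_2^*(G)$ with each of the three candidates contributing comparably. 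The constant $23/27 = 1 - 4/27$ strongly suggests the extremal configuration has the optimum equal to (something like) $2\cdot 3 = 6$ units split as $3+3$, while IC forces an agent's selection probability in $G$ to be bounded by what it can guarantee across the three manipulated graphs, yielding an expected collected progeny of at most $\tfrac{23}{27}\cdot 6$.

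The key steps, in order: (1) Describe the instance(s) precisely: a ``core'' part whose removal is triggered by hiding edges, plus three interchangeable gadgets of sizes chosen so the progeny values come out to nice integers (I expect sizes tuned so that the relevant progenies are, e.g., multiples giving denominators dividing $27$). (2) Enumerate the relevant manipulations: for each of the three symmetric agents, the graph $G_j$ in which that agent has deleted its out-edges, recording the $1$- and $2$-influential sets and the progeny profile in each $G_j$. (3) Invoke IC: for each agent $i$, $x_i(G) \ge x_i(G_j)$ whenever $G_j$ is obtained from $G$ by $i$ (only) hiding out-edges, so the probability mass the mechanism places on the ``good'' agents in $G$ is constrained from below — but more usefully, I will use IC in the contrapositive to upper-bound how much probability can sit on the high-progeny agents simultaneously, since being generous to one candidate in $G$ forces generosity to its alias in some $G_j$ where that alias has small progeny, violating feasibility ($\sum x_S \le 1$) or IC there. (4) Assemble these into a linear program whose objective is $\mathbb{E}_{S\sim x(G)}[\sum_{i\in S}P(i,G)] / S_2^*(G)$ subject to the IC inequalities across $G, G_1, G_2, G_3$ and the probability constraints in each graph; solve it (or exhibit a feasible dual / a weighting of the constraints) to get the bound $23/27$. (5) Conclude that no IC $2$-selection mechanism beats $23/27$ on this family, hence globally.

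I would realize the averaging cleanly as follows: sum the IC constraints over the symmetric orbit, use symmetry to argue the optimal mechanism on these instances can be taken symmetric (a standard symmetrization: average the mechanism over the automorphism group of the family, which preserves IC and the approximation ratio), reducing the unknowns to a couple of scalars — say $p$ = probability a ``type-A'' candidate is selected and $q$ = probability both are — and then the ratio becomes an explicit function of $p,q$ maximized over the feasible polytope cut out by $p \le$ (bound from IC against $G_j$), $2p - q \le 1$, etc. Picking the gadget sizes so this optimum equals exactly $23/27$ is the design problem. The main obstacle I anticipate is step (1)/(5): \emph{finding} the right gadget sizes and verifying that in every manipulated graph $G_j$ the influential sets are exactly what I claim (these sets can behave subtly when several agents' progenies are equal, cf.\ the tie-breaking in $\succ$), and checking that the IC constraints I write down are the binding ones rather than some overlooked manipulation giving an even stronger bound or, worse, an inconsistent one. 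A secondary subtlety is making sure the instance is a legal DAG and that the constructed manipulations only \emph{remove} out-edges of a \emph{single} agent at a time, as the IC definition requires; I would handle this by drawing the gadget explicitly and listing $E_j \subsetneq E$ for each $j$.
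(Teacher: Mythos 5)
Your high-level strategy --- build a small family of DAGs related by single-agent edge-hiding, write down the IC inequalities linking selection probabilities across the family together with the feasibility constraints $\sum_i x_i \le 2$ in each graph, and solve the resulting LP to certify $23/27$ --- is exactly the paper's approach. But the proposal stops at the point where the actual mathematical content begins: you never exhibit the instances, and you explicitly flag ``finding the right gadget sizes'' as an unresolved obstacle. Since the entire force of the theorem lives in that construction, this is a genuine gap, not a detail. Moreover, your guesses about what the extremal configuration looks like do not match what works. The paper does not use three symmetric branches with an optimum split as $3+3$; it uses three \emph{asymmetric} four-agent networks. In the base graph (a) the progenies are $4,3,2,1$ (optimum $7$); graph (b), reachable from (a) by agent $2$ or agent $4$ hiding out-edges, has progenies $1,3,2,1$ (optimum $5$); graph (c), reachable by agent $3$ hiding her out-edge, has progenies $2,1,2,1$ (optimum $4$). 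The binding IC constraints are $x_2^{(b)}\le x_2^{(a)}$, $x_1^{(b)}\le x_4^{(a)}$, $x_3^{(c)}\le x_3^{(a)}$, $x_1^{(c)}\le x_3^{(a)}$, and the LP maximizing the minimum of the three ratios over these constraints gives exactly $23/27$. Your proposed symmetrization step (averaging the mechanism over an automorphism group of the family) is not applicable here because the extremal family has no such symmetry, and pursuing it would steer you toward instances that do not yield the right constant.

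A smaller point in your favor: you are right that one must verify the manipulations are single-agent, removal-only, and that the DAG structure is legal, and you are right that this is ultimately a finite LP whose optimum one can certify by exhibiting a feasible dual or simply the optimal primal point (the paper states the optimal probabilities explicitly, e.g.\ $x_1^{(a)}=2/3$, $x_2^{(a)}=17/27$, $x_3^{(a)}=19/27$, $x_4^{(a)}=0$). But you also worry about pinning down the $1$- and $2$-influential sets in each manipulated graph; the paper's argument never needs them --- IC is applied directly as an inequality between an agent's selection probability before and after her own manipulation, regardless of any influential-set structure. Until the concrete instances and the LP solution are supplied, the proposal is a plan rather than a proof.
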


\begin{proof}
Consider three networks with four agents shown in Figure~\ref{fig:upperbound}. Applying a generic IC 2-selection mechanism on these graphs, suppose the probabilities of each agent $i$ being chosen in the three graphs are $x_i^{(a)}$, $x_i^{(b)}$ and $x_i^{(c)}$. Notice that network (b) can be obtained by agent 2 or 4 in network (a) hiding their out-edges (corresponding to agent 2 or 1 in network (b)), while network (c) can be obtained by agent 3 hiding her out-edge (corresponding to agent 1 or 3 in network (c)). Since the selection mechanism is IC, we have the following constraints:

\begin{align}
    x_2^{(b)}\leq x_2^{(a)},\quad & x_1^{(b)}\leq x_4^{(a)}; \\
    x_3^{(c)}\leq x_3^{(a)},\quad & x_1^{(c)}\leq x_3^{(a)}.
\end{align}

Moreover, the mechanism selects at most 2 agents. Hence,
\begin{align}
    \sum_{i=1}^4 x_i^{(a)} \leq 2, \quad \sum_{i=1}^4 x_i^{(b)} \leq 2, \quad \sum_{i=1}^4 x_i^{(c)} \leq 2.
\end{align}

The approximation ratio of the mechanism must be no more than the least ratio in these three graphs, i.e.,

\begin{equation*}
    \begin{split}
        \alpha \leq \min & \left\{ \frac{4x_1^{(a)} + 3x_2^{(a)} + 2x_3^{(a)} + x_4^{(a)}}{7}, \right. \\
    & \quad \frac{x_1^{(b)} + 3x_2^{(b)} + 2x_3^{(b)} + x_4^{(b)}}{5}, \\
    & \quad \left. \frac{2x_1^{(c)} + x_2^{(c)} + 2x_3^{(c)} + x_4^{(c)}}{4} \right\}
    \end{split}
\end{equation*}

With constraints (1) - (3), we can calculate the highest value of the minimum. Therefore, we have $\alpha\leq 23/27$ and the equations holds when

\begin{align*}
    & x_1^{(a)} = 2/3, x_2^{(a)} = 17/27, x_3^{(a)} = 19/27, x_4^{(a)} = 0; \\
    & x_1^{(b)} = 0, x_2^{(b)} = 17/27, x_3^{(b)} = 1, x_4^{(b)} = 10/27; \\
    & x_1^{(c)} = 19/27, x_2^{(c)} = 16/27, x_3^{(c)} = 19/27, x_4^{(c)} = 0.
\end{align*}

\end{proof}

\begin{figure}[htbp]
    \centering
    \includegraphics[width=0.48\textwidth]{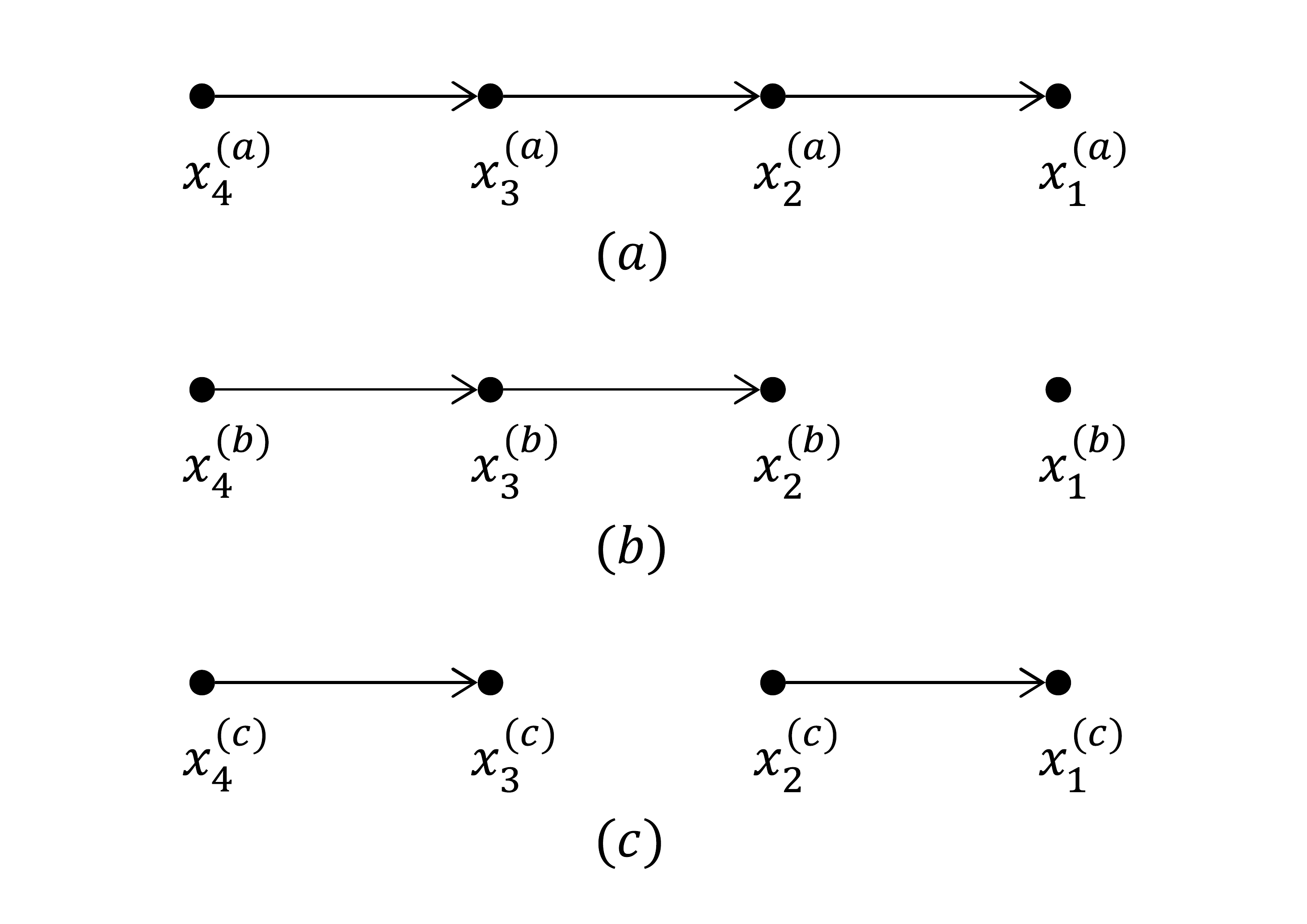}
    \caption{Three networks with four agents where (b) and (c) can be obtained by one of the agents in (a) hiding her out-edge. The probabilities of each agent being chosen by a generic IC 2-selection mechanism are attached beside the node.}
    \label{fig:upperbound}
\end{figure}

\subsection{Utilizing the 2-influential Set}\label{sec:ontop2-set}
To break through the limitation of the 1-influential set, one natural idea is to consider the 2-influential set. We first characterize the set by following observations.

\begin{observation}\label{ob:1in2}
For any graph $G$, $S^{\mathsf{inf.}}_1(G)\subseteq S^{\mathsf{inf.}}_2(G)$.
\end{observation}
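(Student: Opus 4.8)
The plan is to unfold the definitions of the two influential sets and show that membership in $S^{\mathsf{inf.}}_1(G)$ directly forces membership in $S^{\mathsf{inf.}}_2(G)$. By Definition~\ref{def:topset}, an agent $i \in S^{\mathsf{inf.}}_1(G)$ means that in the graph $G' = (N, E \setminus E_i)$ we have $|\{j \mid j \succ i\}| < 1$, i.e. $|\{j \mid j \succ i\}| = 0$. First I would observe that $\{j \mid j \succ i\} = 0 < 2$ trivially, and that this inequality is exactly the condition required for $i \in S^{\mathsf{inf.}}_2(G)$, since the auxiliary graph $G'$ in Definition~\ref{def:topset} is the same for both values of $k$ (it only depends on $i$ and $G$, not on $k$). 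Hence the inclusion is immediate from monotonicity of the threshold $k$.

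The key steps, in order, are: (1) fix an arbitrary agent $i \in S^{\mathsf{inf.}}_1(G)$; (2) write down what this means via Definition~\ref{def:topset} with $k=1$, obtaining $|\{j \mid j \succ i\}| < 1$ in $G' = (N, E \setminus E_i)$; (3) note $|\{j \mid j \succ i\}| < 1 \leq 2$ in the very same graph $G'$; (4) apply Definition~\ref{def:topset} with $k=2$ to conclude $i \in S^{\mathsf{inf.}}_2(G)$; (5) since $i$ was arbitrary, conclude $S^{\mathsf{inf.}}_1(G) \subseteq S^{\mathsf{inf.}}_2(G)$.

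There is essentially no obstacle here: the statement is a one-line consequence of the definition, and the only thing to be careful about is that the "test graph" $G' = (N, E \setminus E_i)$ used to decide membership is identical regardless of whether we are testing for the $1$-influential or $2$-influential set, so the weaker rank condition ($<1$) implies the stronger-allowed one ($<2$). I would present this as a short paragraph rather than a displayed multi-step argument. The main thing worth stating explicitly for the reader is the intuition: any agent who can make herself rank first by hiding edges can certainly make herself rank among the top two by the same manipulation (indeed by doing nothing extra).
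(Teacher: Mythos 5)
Your proof is correct and follows exactly the route the paper intends: the paper gives no formal proof of this observation, only the one-sentence remark that agents who can pretend to be first are definitely in the $2$-influential set, and your unfolding of Definition~\ref{def:topset} (same test graph $G'=(N,E\setminus E_i)$ for both $k$, with $|\{j\mid j\succ i\}|<1$ implying $|\{j\mid j\succ i\}|<2$) is precisely that argument made explicit. The only cosmetic slip is writing $\{j\mid j\succ i\}=0$ where you mean its cardinality, which you already fix in the step-by-step version.
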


\begin{observation}\label{ob:12in2}
For any graph $G$, $\{ i_1^*, i_2^* \}\subseteq S^{\mathsf{inf.}}_2(G)$.
\end{observation}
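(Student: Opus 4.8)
I would treat the two memberships separately, the one for $i_1^*$ being essentially a corollary of earlier results and the one for $i_2^*$ carrying the real content. For $i_1^*$: Observation~\ref{ob:1-like} already records that $i_1^* = i_1 \in S^{\mathsf{inf.}}_1(G)$, and Observation~\ref{ob:1in2} gives $S^{\mathsf{inf.}}_1(G) \subseteq S^{\mathsf{inf.}}_2(G)$, so $i_1^* \in S^{\mathsf{inf.}}_2(G)$. (If one prefers a self-contained argument: in a DAG no path terminating at a node can traverse an out-edge of that node, so deleting $E_{i_1^*}$ leaves $P(i_1^*)$ unchanged, while it can only weakly shrink every other $P(j)$; hence $i_1^*$ still has rank $1$ in $(N, E\setminus E_{i_1^*})$, which is exactly $i_1^* \in S^{\mathsf{inf.}}_1(G)$.)

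For $i_2^*$, write $G'' = (N, E \setminus E_{i_2^*})$. I would lean on two monotonicity facts about edge deletion: (i) $P(i_2^*, G'') = P(i_2^*, G)$, because in a DAG every path into $i_2^*$ avoids the out-edges of $i_2^*$; and (ii) $P(j, G'') \le P(j, G)$ for every $j$, since deleting edges destroys no path. Now suppose some $j \ne i_1^*$ satisfies $j \succ i_2^*$ in $G''$. In either case of the definition of $\succ$ this gives $P(j, G'') \ge P(i_2^*, G'')$, so combining with (i) and (ii) we get $P(j, G) \ge P(j, G'') \ge P(i_2^*, G'') = P(i_2^*, G)$. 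Since $i_2^*$ has rank $2$ in $G$, a strict inequality $P(j,G) > P(i_2^*,G)$ would force $j = i_1^*$; hence all these are equalities, so in particular $P(j, G'') = P(i_2^*, G'')$ and $P(j,G) = P(i_2^*,G)$. From the first equality, $j \succ i_2^*$ in $G''$ must be realized by the index tie-break, i.e. $j > i_2^*$; from the second, together with $j \notin \{i_1^*, i_2^*\}$, agent $j$ ranks strictly below $i_2^*$ in $G$, i.e. $j < i_2^*$ --- a contradiction. Thus $i_1^*$ is the only agent that can rank above $i_2^*$ in $G''$, so $|\{j \mid j \succ i_2^*\}| \le 1 < 2$ in $G''$, which is precisely $i_2^* \in S^{\mathsf{inf.}}_2(G)$.

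The only step that is not pure bookkeeping is fact~(i) --- the invariance of an agent's own progeny under deletion of her own out-edges --- together with the monotonicity of deletion on every progeny in fact~(ii); I would state these first, possibly as a one-line remark, since the progeny of $i$ depends only on which agents can reach $i$ and out-edges of $i$ never lie on such paths in a DAG. After that the argument for $i_2^*$ is a short case split on whether "$\succ$ in $G''$" is witnessed by a strict progeny gap or by the tie-break, the delicate point being to notice that the chain $P(i_2^*, G) = P(i_2^*, G'') \le P(j, G'') \le P(j, G)$ is forced to collapse to equalities once $j \ne i_1^*$, which is exactly what makes the tie-break comparisons in $G$ and in $G''$ point in opposite directions.
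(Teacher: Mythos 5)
Your proof is correct. The paper states this observation without any formal proof, offering only the remark that $i_1^*$ and $i_2^*$ are ``naturally'' in the $2$-influential set, so your argument --- progeny invariance of an agent under deletion of her own out-edges, monotonicity of all progenies under edge deletion, and the careful collapse of the inequality chain that makes the tie-break comparisons in $G$ and $G''$ contradict each other --- is a faithful and complete formalization of exactly the intuition the authors appeal to.
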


According to Definition~\ref{def:topset}, agents who can pretend to be the first in the ranking sequence are definitely in the 2-influential set. The agents $i_1^*$ and $i_2^*$ who rank first and second are also naturally in the 2-influential set. Based on the relationship between $i_1^*$ and $i_2^*$, the 2-influential set will have different forms.

\begin{observation}\label{ob:2-chain}
For any graph $G$, if $i_2^*\in P(i_1^*)$, the set $S^{\mathsf{inf.}}_2(G)$ can be written as $\{ i_1, i_2, \cdots, i_m \}$, where $i_1 = i_1^*$, $i_2 = i_2^*$, and $i_{t+1}\in P(i_t)$ for all $t<m$.
\end{observation}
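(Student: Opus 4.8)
The plan is to pin down the two $\succ$-largest elements of the set, settle $t=1$ directly, and prove the chain property for $t\ge 2$ by contradiction via one auxiliary claim. Throughout I write ``$u$ reaches $v$'' for $u\in P(v)$. By Observation~\ref{ob:12in2} we have $i_1^*,i_2^*\in S^{\mathsf{inf.}}_2(G)$, and since $i_1^*,i_2^*$ are the two $\succ$-largest agents in $G$ they are also the two $\succ$-largest elements of $S^{\mathsf{inf.}}_2(G)$; hence in the enumeration $S^{\mathsf{inf.}}_2(G)=\{i_1,\dots,i_m\}$ with $i_1\succ\cdots\succ i_m$ we get $i_1=i_1^*$ and $i_2=i_2^*$ (and $m\ge 2$, the case $|N|=1$ being trivial). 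For $t=1$, the required relation $i_2=i_2^*\in P(i_1^*)=P(i_1)$ is precisely the hypothesis, so only $t\ge 2$ needs work.

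I would first prove the auxiliary claim $S^{\mathsf{inf.}}_2(G)\subseteq P(i_1^*)$: every agent of the $2$-influential set reaches $i_1^*$. This is immediate for $i_1^*$ and, by the hypothesis $i_2^*\in P(i_1^*)$, for $i_2^*$. For any other $i\in S^{\mathsf{inf.}}_2(G)$, the rank of $i$ in $G$ is at least $3$ (since $i\neq i_1^*,i_2^*$), so $i_1^*$ and $i_2^*$ are both $\succ i$; since $i\in S^{\mathsf{inf.}}_2(G)$, at most one agent is $\succ i$ in $(N,E\setminus E_i)$, so one of $i_1^*,i_2^*$, say $a$, is no longer $\succ i$ there. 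Deleting $E_i$ does not change which agents reach $i$, so (a short tie-break check confirms this) the progeny of $a$ must have strictly decreased; an agent removed from $P(a)$ must have routed through an out-edge of $i$, hence through $i$, so $i$ reaches $a$. If $a=i_1^*$ we are done; if $a=i_2^*$, then $i$ reaches $i_2^*$, which by the hypothesis reaches $i_1^*$, so $i$ reaches $i_1^*$. This proves the claim.

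Now fix $t\ge 2$ and suppose, for contradiction, that $i_{t+1}\notin P(i_t)$; let $G''=(N,E\setminus E_{i_{t+1}})$. Since $i_{t+1}$ does not reach $i_t$, no path to $i_t$ traverses an out-edge of $i_{t+1}$, so $P(i_t,G'')=P(i_t,G)$; likewise $P(i_{t+1},G'')=P(i_{t+1},G)$, and hence $i_t\succ i_{t+1}$ still holds in $G''$. Also $i_t\notin P(i_{t+1})$, for otherwise $P(i_t)\subseteq P(i_{t+1})$, which with $i_t\succ i_{t+1}$ forces $P(i_t)=P(i_{t+1})$ and hence $i_{t+1}\in P(i_t)$, a contradiction. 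By the auxiliary claim $i_t$ reaches $i_1^*$; since $i_t\neq i_1^*$ (as $t\ge 2$) and $i_t$ does not reach $i_{t+1}$, no path from $i_t$ to $i_1^*$ uses $i_{t+1}$, so $i_t$ still reaches $i_1^*$ in $G''$, whence $P(i_t,G'')\subseteq P(i_1^*,G'')$ and $|P(i_1^*,G'')|\ge|P(i_t,G'')|=|P(i_t,G)|\ge|P(i_{t+1},G)|=|P(i_{t+1},G'')|$. If this inequality is strict, $i_1^*\succ i_{t+1}$ in $G''$; if it is an equality, then $P(i_t,G'')=P(i_1^*,G'')$, forcing $i_1^*$ to reach $i_t$ in $G''$ and hence in $G$, contradicting acyclicity (since $i_t$ reaches $i_1^*$ and $i_t\neq i_1^*$), so equality cannot occur. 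Either way $i_1^*\succ i_{t+1}$ in $G''$, so at least two agents ($i_t$ and $i_1^*$) are $\succ i_{t+1}$ in $G''$, contradicting $i_{t+1}\in S^{\mathsf{inf.}}_2(G)$. Hence $i_{t+1}\in P(i_t)$.

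The step I expect to be the crux is the combination of the auxiliary claim with the observation that $i_t$'s route to $i_1^*$ survives the deletion of $E_{i_{t+1}}$: the whole contradiction turns on this route not being forced through $i_{t+1}$, and this is exactly where the hypothesis $i_2^*\in P(i_1^*)$ is indispensable, since without it an element of $S^{\mathsf{inf.}}_2(G)$ need not reach $i_1^*$ at all and the chain can genuinely break. The routine parts are the tie-breaking bookkeeping in the auxiliary claim and the elementary ``path through a deleted out-edge'' reasoning used repeatedly.
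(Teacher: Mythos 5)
Your proof is correct and follows essentially the same route as the paper's: assuming $i_{t+1}\notin P(i_t)$, you show that deleting $E_{i_{t+1}}$ leaves two agents still ranked above $i_{t+1}$, contradicting membership in $S^{\mathsf{inf.}}_2(G)$ (the paper uses $i_{t-1}$ and $i_t$ as the two blockers via the chain, you use $i_t$ and $i_1^*$ via your auxiliary claim $S^{\mathsf{inf.}}_2(G)\subseteq P(i_1^*)$). Your explicit tie-break and acyclicity checks also make rigorous the step the paper compresses into ``this pattern extends to subsequent agents.''
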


\begin{proof}
When there is another agent $i_3$, except for $i_1^*$ and $i_2^*$, is in the 2-influential set, she must have the ability to decrease $i_1^*$ or $i_2^*$'s progeny by hiding her out-edges, i.e., at least one of $i_3\in P(i_1^*)$ and $i_3\in P(i_2^*)$ is satisfied. If $i_2^*\in P(i_1^*)$, we will show $i_3$ must belong to $P(i_2^*)$ by contradiction as follows.

If $i_3\notin P(i_2^*)$, then she cannot decrease $i_2^*$'s progeny by hiding her out-edges. Since $i_2^*\notin P(i_3)$, $i_2^*\in P(i_1^*)$ is always satisfied no matter how $i_3$ hides her out-edges. Hence, in order to rank first or second after hiding out-edges, $i_3\succ i_2^*$ must be satisfied, which contradicts that $i_2^*$ is with rank 2. Therefore, $i_3\in P(i_2^*)$.


Similarly, when there is another agent $i_4\prec i_3$ in the 2-influential set, she must belong to $P(i_3)$, and this pattern extends to subsequent agents in the 2-influential set. 
\end{proof}

\begin{observation}\label{ob:2-single}
For any graph $G$, if $i_2^*\notin P(i_1^*)$, the set $S^{\mathsf{inf.}}_2(G)$ can be written as $\{i_1, i_2, \cdots, i_m\}$, where $i_1 = i_1^*$, $i_2 = i_2^*$, and for others, at least one of the below is true:
\begin{itemize}
    \item $i_3\in P(i_1)$, $i_{t+1}\in P(i_t)$ for all $3\leq t<m$;
    \item $i_3\in P(i_2)$, $i_{t+1}\in P(i_t)$ for all $3\leq t<m$.
\end{itemize}
\end{observation}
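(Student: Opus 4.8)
The plan is to follow the template of the proof of Observation~\ref{ob:2-chain}, adapted to the fact that here $i_1^{*}$ and $i_2^{*}$ lie ``in parallel'' rather than one in the progeny of the other, which is exactly why the chain of lower-ranked $2$-influential agents may hang off either of them. As a preliminary I would fix the labeling and record two elementary facts. For the labeling: by Observation~\ref{ob:12in2}, $\{i_1^{*},i_2^{*}\}\subseteq S^{\mathsf{inf.}}_2(G)$, and since $i_1^{*}$ and $i_2^{*}$ are the only agents with $\succ$-rank below $3$ in $G$, ordering $S^{\mathsf{inf.}}_2(G)$ by $\succ$ yields $i_1=i_1^{*}$, $i_2=i_2^{*}$, followed by a tail $i_3\succ i_4\succ\cdots\succ i_m$ of agents all of whose $\succ$-ranks in $G$ are at least $3$. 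The two facts are: (i) if $a\succ b$ in $G$ then $a\notin P(b,G)$ (otherwise $a$ reaches $b$, so $P(a,G)\subsetneq P(b,G)$ and hence $b\succ a$); and (ii) writing $G_x=(N,E\setminus E_x)$ for the graph where agent $x$ deletes all her out-edges, every directed path avoiding $x$ is preserved in $G_x$, and in particular $P(b,G_x)=P(b,G)$ whenever $x\notin P(b,G)$.

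Assuming $m\ge3$ (otherwise the statement is vacuous), the first step is the ``head'' claim: $i_3\in P(i_1,G)\cup P(i_2,G)$. Observe $P(i_3,G_{i_3})=P(i_3,G)$. If $i_3$ lay in neither $P(i_1,G)$ nor $P(i_2,G)$, then by fact~(ii) also $P(i_1,G_{i_3})=P(i_1,G)$ and $P(i_2,G_{i_3})=P(i_2,G)$, so $i_1$ and $i_2$ would both still $\succ$-outrank $i_3$ in $G_{i_3}$; thus at least two agents outrank $i_3$ in $G_{i_3}$, contradicting $i_3\in S^{\mathsf{inf.}}_2(G)$ via Definition~\ref{def:topset}. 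Hence there exists $y^{*}\in\{i_1,i_2\}$ with $i_3\in P(y^{*},G)$; this $y^{*}$ will be the ``root'' named in whichever of the two bullets we prove.

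The second step is the ``chain'' claim $i_{t+1}\in P(i_t,G)$ for all $3\le t<m$, by induction on $t$. Suppose for contradiction $i_{t+1}\notin P(i_t,G)$, and set $y:=y^{*}$ if $t=3$ and $y:=i_{t-1}$ if $t\ge4$; in either case $y\notin\{i_t,i_{t+1}\}$, $y\succ i_t$ in $G$, and $i_t\in P(y,G)$ (by the choice of $y^{*}$ for $t=3$, by the induction hypothesis for $t\ge4$). By fact~(i), $i_t\notin P(i_{t+1},G)$, so no path witnessing $i_t\in P(y,G)$ can pass through $i_{t+1}$; by fact~(ii) such a path survives in $G_{i_{t+1}}$, hence $i_t$ still reaches $y$ there, and as $y$ cannot reach $i_t$ in $G_{i_{t+1}}$ either (no edges are created), $P(i_t,G_{i_{t+1}})\subsetneq P(y,G_{i_{t+1}})$, i.e.\ $y\succ i_t$ in $G_{i_{t+1}}$. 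Moreover $i_{t+1}\notin P(i_t,G)$ gives $P(i_t,G_{i_{t+1}})=P(i_t,G)$ by fact~(ii), while $P(i_{t+1},G_{i_{t+1}})=P(i_{t+1},G)$ trivially, so $i_t\succ i_{t+1}$ carries from $G$ over to $G_{i_{t+1}}$. Then $y$ and $i_t$ both outrank $i_{t+1}$ in $G_{i_{t+1}}$, contradicting $i_{t+1}\in S^{\mathsf{inf.}}_2(G)$. Hence $i_{t+1}\in P(i_t,G)$, closing the induction; together with the head claim, the first bullet holds when $y^{*}=i_1$ and the second when $y^{*}=i_2$ (and both if $i_3$ reaches $i_1$ and $i_2$).

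I expect the chain step to be the crux: the delicate bookkeeping is tracking which progeny sets shrink when $i_{t+1}$ hides her out-edges, and --- above all --- showing that the ``parent path'' $i_t\rightsquigarrow y$ survives that deletion, which rests entirely on fact~(i), since rank order is precisely what prevents $i_t$ from reaching $i_{t+1}$ and hence prevents that path from visiting $i_{t+1}$. A harmless subtlety to keep in mind is that the tail agents need not have consecutive $\succ$-ranks in $G$ (an agent of intermediate rank may fail to be $2$-influential), but the arguments above only invoke that $i_1$, $i_2$ and the earlier tail agents $\succ$-outrank the current one, which holds by the chosen ordering regardless of such gaps.
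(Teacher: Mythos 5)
Your proof is correct and takes essentially the same route as the paper's: a contradiction argument showing that if a tail agent $i_{t+1}$ were not in the progeny of its predecessor, then after she hides her out-edges two agents (the anchor $y$ and $i_t$) would still outrank her, violating $i_{t+1}\in S^{\mathsf{inf.}}_2(G)$. You merely spell out, via your facts (i) and (ii), the path-preservation step that the paper's terser proof leaves implicit.
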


\begin{proof}
When there is another agent $i_3$, except for $i_1^*$ and $i_2^*$, is in the 2-influential set, at least one of $i_3\in P(i_1^*)$ and $i_3\in P(i_2^*)$ is satisfied. 

We first assume $i_3\in P(i_1^*)$. Then, when there is another agent $i_4\prec i_3$ in the 2-influential set, we show that $i_4\in P(i_3)$ by contradiction as follows. If $i_4\notin P(i_3)$, she cannot decrease $i_3$'s progeny by hiding her out-edges. Since $i_3\notin P(i_4)$, $i_3\in P(i_1^*)$ is always satisfied no matter how $i_4$ hides her out-edges. Hence, $i_4\succ i_3$ must be satisfied to make $i_4$ be in the 2-influential set, which makes a contradiction. Therefore, $i_4\in P(i_3)$. Similarly, when there is another agent $i_5\prec i_4$ in the 2-influential set, she must belong to $P(i_4)$, and 
this pattern extends to subsequent agents in the 2-influential set. The same results can also be obtained similarly if $i_3\in P(i_2^*)$.
\end{proof}

From the above observations, we can see that the structure of the 2-influential set is much more complex than that of the 1-influential set. Furthermore, the progeny of the last agent in the 2-influential set might be much smaller than $P(i_1^*)$ since $P(i_2^*)$ might be small. These are the main difficulties to utilize the 2-influential set. Extending the idea of LDM and $\beta$-LM, we propose the following mechanism.

\begin{framed}
 \noindent\textbf{Logarithm After Least Deterministic (LALD)}
 
 \noindent\rule{\textwidth}{0.5pt}
 \begin{enumerate}
    \item Given a network $G=(N,E)$, find the 1-influential set $S^{\mathsf{inf.}}_1(G)$ and the 2-influential set $S^{\mathsf{inf.}}_2(G)$.
     \item If $S^{\mathsf{inf.}}_2(G)\setminus S^{\mathsf{inf.}}_1(G) = \emptyset$, then $S^{\mathsf{inf.}}_1(G) = S^{\mathsf{inf.}}_2(G) = \{i_1, i_2 \dots, i_m\}$, where $i_t\succ i_{t+1}$ for all $1\leq t < m$. Then, assign the probability of each agent to be selected as follows:
     \begin{align*}
         x_j = \begin{cases}
         1, & j = i_{m} \\
         \frac{1}{1+\ln 2}, & j = i_{m-1} \\
         \frac{\ln 2}{1+\ln 2} \log_2 \frac{P(i_t)}{P(i_{t+1})}, & j = i_t, t<m-1 \\
         0, & j \notin S^{\mathsf{inf.}}_2(G).
         \end{cases}
     \end{align*}
     \item If $S^{\mathsf{inf.}}_2(G)\setminus S^{\mathsf{inf.}}_1(G) \neq \emptyset$, suppose $S^{\mathsf{inf.}}_2(G) = \{i_1, \dots, i_m\}$ where $i_t\succ i_{t+1}$ for all $1\leq t < m$. First, deterministically select the agent $i_m$, i.e., $x_{i_m} = 1$. Then, select the second agent by applying $(1/(1+\ln2))$-LM on $G$.
 \end{enumerate}
\end{framed}

Intuitively, LALD first deterministically selects the last agent in the 2-influential set. Then, it uses the same probability distribution as $(1/(1+\ln2))$-LM to select another agent among the remaining agents in the 1-influential set.

\begin{example}
We take the networks shown in Figure~\ref{fig:top1selection} and Figure~\ref{fig:lald} as running examples.

In Figure~\ref{fig:top1selection}, suppose $j<i_2$. Then, $S^{\mathsf{inf.}}_2(G) = S^{\mathsf{inf.}}_1(G) = \{i_1, i_2, i_3, i_4\}$. Hence, LALD first deterministically selects agent $i_4$, i.e., $x_{i_4} = 1$. For the remaining agents in $S^{\mathsf{inf.}}_1(G)$, LALD assigns the probabilities as $x_{i_{3}} = 1/(1+\ln 2)\approx0.59$, $x_{i_2} = \ln2 / (1+\ln 2) \log_2 (P(i_2)/P(i_3)) \approx 0.11$, and $x_{i_1} = \ln2 / (1+\ln 2) \log_2 (P(i_1)/P(i_2)) \approx 0.09$.

In Figure~\ref{fig:lald}, suppose $i_2 > i_3$ and $i_4 > j$. Then, $S^{\mathsf{inf.}}_1(G) = \{i_1\}$ and $S^{\mathsf{inf.}}_2(G) = \{i_1, i_2, i_3, i_4\}$. Hence, LALD first deterministically selects agent $i_4$, i.e., $x_{i_4} = 1$. LALD then runs $(1/(1+\ln 2))$-LM, which assigns the probabilities among $S^{\mathsf{inf.}}_1(G)$. Here, it assigns that $x_{i_1} = 1/(1+\ln 2)\approx 0.59$.
\end{example}



\begin{figure}[!htbp]
    \centering
    \includegraphics[width=0.48\textwidth]{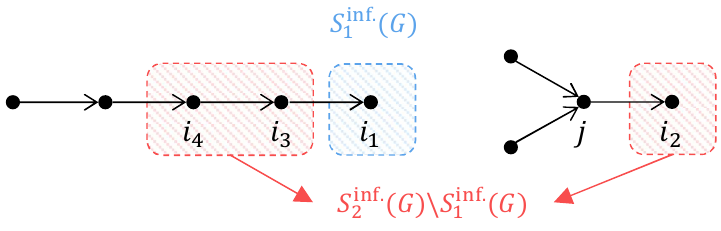}
    \caption{An example of the network, where the marked agents have the relationship as $i_1\succ i_2\succ i_3\succ i_4\succ j$. The 1-influential set and the 2-influential set are represented by dashed borders.}
    \label{fig:lald}
\end{figure}

\begin{theorem}
LALD is an IC 2-selection mechanism.
\end{theorem}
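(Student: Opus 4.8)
The plan is to verify the IC inequality $x_i(G)\ge x_i(G')$ for each agent $i$ and each $G'$ obtained from $G$ by $i$ hiding out-edges, built around two invariance facts that I would record as short lemmas first. Since IC forces $E\setminus E_i=E'\setminus E'_i$, the ``self-deleted'' graph $G_{-i}:=(N,E\setminus E_i)$ on which Definition~\ref{def:topset} is evaluated is literally the same for $G$ and $G'$; hence $i\in S^{\mathsf{inf.}}_1(G)\Leftrightarrow i\in S^{\mathsf{inf.}}_1(G')$ and $i\in S^{\mathsf{inf.}}_2(G)\Leftrightarrow i\in S^{\mathsf{inf.}}_2(G')$, and moreover $P(i,G')=P(i,G)$ (hiding one's own out-edges never changes one's own progeny in a DAG). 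Secondly, for every $j$ with $j\in P(i,G)$ --- which by Observations~\ref{ob:1-like}, \ref{ob:2-chain}, \ref{ob:2-single} includes every agent lying after $i$ on the relevant influential chain --- acyclicity forbids any path reaching $j$ from using an out-edge of $i$, so $P(j,G')=P(j,G)$ and $j$'s membership in both influential sets is unaffected by $i$'s manipulation.

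With these in hand, the case analysis runs as follows. If $i\notin S^{\mathsf{inf.}}_2(G)$, then $i\notin S^{\mathsf{inf.}}_2(G')$ and LALD never charges an agent outside the $2$-influential set, so $x_i(G')=0$. If $i$ is the agent deterministically selected with probability $1$ (the last agent of $S^{\mathsf{inf.}}_2(G)$, in Step~2 or Step~3), there is nothing to prove. If $i\in S^{\mathsf{inf.}}_2(G)\setminus S^{\mathsf{inf.}}_1(G)$ but is not that last agent, then $x_i(G)=0$ since the embedded $(1/(1+\ln2))$-LM only charges $S^{\mathsf{inf.}}_1$, and I must show $x_i(G')=0$, i.e.\ $i$ does not become the last agent of $S^{\mathsf{inf.}}_2(G')$: by Observations~\ref{ob:2-chain}/\ref{ob:2-single} some agent of $S^{\mathsf{inf.}}_2(G)$ is ranked strictly below $i$, and since $G'_{-j}\subseteq G_{-j}$ together with the invariance lemmas that agent stays in $S^{\mathsf{inf.}}_2(G')$ and stays below $i$ (recall $P(i,G')=P(i,G)$ while all other progenies only shrink). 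Finally, if $i=i_t\in S^{\mathsf{inf.}}_1(G)$ is not the last agent of $S^{\mathsf{inf.}}_1(G)$, then by the second invariance lemma the tail $i_{t+1},i_{t+2},\dots$ stays in $S^{\mathsf{inf.}}_1(G')$ below $i$ with unchanged progeny and order, so in either regime $i$ is charged the logarithmic value $\frac{\ln2}{1+\ln2}\log_2\frac{P(i,G')}{P(w,G')}$ for its successor $w$ in $S^{\mathsf{inf.}}_1(G')$; since $P(i,G')=P(i,G)$ and $P(w,G')\ge P(i_{t+1},G')=P(i_{t+1},G)$, this is at most $x_i(G)$. The delicate boundary sub-cases --- $i$ second-to-last in $S^{\mathsf{inf.}}_1(G)$ under the Step~2 regime, where the value is $1/(1+\ln2)$, and $i$ last in $S^{\mathsf{inf.}}_1(G)$ --- are handled separately, invoking Observation~\ref{ob:1-ratio} ($P(i_t)/P(i_{t+1})\le 2<e$) to rule out a profitable jump into such a slot.

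I expect the main obstacle to be exactly this last family of situations: controlling how $S^{\mathsf{inf.}}_1(G')$ and $S^{\mathsf{inf.}}_2(G')$ re-form under $i$'s deviation so as to exclude (a) $i$ promoting herself into the single deterministically-selected slot or the more richly paid second-to-last/last slot, and (b) a regime switch between Step~2 and Step~3 that changes which formula applies to $i$. Both are resolved by combining the two invariance lemmas with the chain structure of Observations~\ref{ob:1-like}, \ref{ob:2-chain}, \ref{ob:2-single} --- which pin enough agents permanently below $i$ in the relevant influential set --- and with the $2$-to-$1$ bound of Observation~\ref{ob:1-ratio}; once enough agents are pinned below $i$, monotonicity of $\log_2(P(i)/P(\text{successor}))$ in the successor's progeny closes the argument. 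The remaining cases (agents outside every influential set, and the deterministically selected agent) are immediate, and wherever LALD simply runs $(1/(1+\ln2))$-LM its IC-ness follows from Theorem~\ref{thm:lm-ic} since $1/(1+\ln2)>1/2$.
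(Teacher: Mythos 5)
Your overall strategy is the same as the paper's: a case split over (a) agents outside $S^{\mathsf{inf.}}_2(G)$, (b) the deterministically selected last agent, (c) non-last agents of $S^{\mathsf{inf.}}_2(G)\setminus S^{\mathsf{inf.}}_1(G)$, and (d) agents of $S^{\mathsf{inf.}}_1(G)$, with the boundary slots of the logarithmic part treated separately. Your two invariance facts are a welcome explicit foundation that the paper's proof merely asserts, and the first one (membership of $i$ herself is determined by $E\setminus E_i$, which the IC definition fixes) is exactly right.

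There is, however, one concrete gap: your second invariance lemma is stated only for $j\in P(i,G)$, but several of your cases need it for agents that Observations~\ref{ob:2-chain} and \ref{ob:2-single} do \emph{not} place in $P(i,G)$. Concretely, if $i=i_2^*\in S^{\mathsf{inf.}}_2(G)\setminus S^{\mathsf{inf.}}_1(G)$ and the tail $i_3,\dots,i_m$ of the $2$-influential set lies in $P(i_1^*)$ rather than $P(i_2^*)$ (the first branch of Observation~\ref{ob:2-single}), or if $i=i_1^*$ is the sole agent of $S^{\mathsf{inf.}}_1(G)$ in the Step-3 regime and the rest of $S^{\mathsf{inf.}}_2(G)$ sits in a disjoint branch (as in Figure~\ref{fig:lald}), then the agent you need to ``pin below $i$'' is outside $P(i,G)$ and your lemma as written says nothing about whether it survives in $S^{\mathsf{inf.}}_2(G')$. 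The fix is easy and you should make it explicit: the right hypothesis is $i\notin P(j,G)$, not $j\in P(i,G)$. Indeed, a path into $j$ that uses a hidden out-edge of $i$ would contain a path from $i$ to $j$, so $P(j,\cdot)$ (and $P(j,G'_{-j})$ versus $P(j,G_{-j})$) is unaffected whenever $i\notin P(j,G)$; and every $j\prec i$ automatically satisfies $i\notin P(j,G)$, since $i\in P(j,G)$ with $i\neq j$ would give $P(j)\supseteq P(i)\cup\{j\}$ and hence $j\succ i$. With the lemma strengthened this way, every lower-ranked member of either influential set keeps its progeny and its membership under $i$'s deviation, and your case analysis (including the boundary sub-cases you flag, which close via Observation~\ref{ob:1-ratio} and Theorem~\ref{thm:lm-ic} exactly as you indicate) goes through.
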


\begin{proof}
For any graph $G\in \mathcal{G}$, We consider three different types of agents.
    \begin{enumerate}
        \item For an agent $i\notin S^{\mathsf{inf.}}_2(G)$, by definition, she can never be in the set by hiding her out-edges. Hence, she will always have 0 probability to be chosen.
        \item For an agent $i\in S^{\mathsf{inf.}}_2(G)\setminus S^{\mathsf{inf.}}_1(G)$ when $S^{\mathsf{inf.}}_2(G)\setminus S^{\mathsf{inf.}}_1(G)\neq \emptyset$, there are two cases. (i) If $i$ is the last agent in the 2-influential set, her probability to be chosen is 1. Hence, she has no incentive to manipulate. (ii) If $i$ is not the last agent, no matter how she hides her out-edges, both the last agent and herself are still in the set $S^{\mathsf{inf.}}_2(G)\setminus S^{\mathsf{inf.}}_1(G)$. Hence, her probability to be chosen remains 0.
        \item For an agent $i\in S^{\mathsf{inf.}}_1(G)$, suppose $S^{\mathsf{inf.}}_1(G) = \{ i_1,\dots, i_q \}$ with $i_t\succ i_{t+1}$ for all $1\leq t< q$. There are three cases. (i) If $i$ is the last agent $i_q$ in the 1-influential set, she has no incentive to manipulate when $S^{\mathsf{inf.}}_2(G)\setminus S^{\mathsf{inf.}}_1(G) = \emptyset$ since $x_i = 1$. When $S^{\mathsf{inf.}}_2(G)\setminus S^{\mathsf{inf.}}_1(G) \neq \emptyset$, no matter how $i$ hides her out-edges, agents in the set $S^{\mathsf{inf.}}_2(G)\setminus S^{\mathsf{inf.}}_1(G)$ will still be in the 2-influential set (even may be in the 1-influential set). After $i$ hides some out-edges, if $S^{\mathsf{inf.}}_2(G')\setminus S^{\mathsf{inf.}}_1(G')\neq \emptyset$, then $i$ cannot have higher probability by Theorem~\ref{thm:lm-ic}; if $S^{\mathsf{inf.}}_2(G')\setminus S^{\mathsf{inf.}}_1(G')= \emptyset$, then $i$ can have at most $1/(1+\ln 2)$ probability since she will no longer be the last agent in $S^{\mathsf{inf.}}_1(G')$, which equals to her original probability. (ii) If $i = i_{q-1}$, she has no incentive to manipulate when $S^{\mathsf{inf.}}_2(G)\setminus S^{\mathsf{inf.}}_1(G) = \emptyset$. It is because $i_q$ will always be in the 1-influential set no matter how $i$ hides her out-edges, which makes increasing her probability to be chosen impossible. When $S^{\mathsf{inf.}}_2(G)\setminus S^{\mathsf{inf.}}_1(G) \neq \emptyset$, it is almost the same condition as that for $i= i_q$. Hence, she cannot increase her probability by manipulation, either. (iii) If $i = i_t$ with $t<q-1$, no matter how she hides her out-edges, $i_q$ and $i_{q-1}$ are always in the 1-influential set. Hence, $i$'s probability to be chosen will not change.
    \end{enumerate}

    Taking all the above together, we can conclude that the mechanism is IC.
\end{proof}

\begin{theorem}
LALD is $\frac{3+\ln2}{4(1+\ln2)}$-optimal.
\end{theorem}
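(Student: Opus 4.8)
The plan is to bound the approximation ratio separately in the two regimes of LALD --- Step~2, where $S^{\mathsf{inf.}}_2(G) = S^{\mathsf{inf.}}_1(G)$, and Step~3, where $S^{\mathsf{inf.}}_2(G) \setminus S^{\mathsf{inf.}}_1(G) \neq \emptyset$ --- and to show that each regime already attains exactly the claimed bound, so that taking the infimum over $\mathcal{G}$ changes nothing. Throughout write $\beta = 1/(1+\ln 2)$ and note the identities $1-\beta = (\ln 2)/(1+\ln 2)$ and $(1-\beta)/\ln 2 = \beta$, which are what make the telescoping below collapse.

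For the Step~2 regime, fix $G$ with $S^{\mathsf{inf.}}_1(G) = S^{\mathsf{inf.}}_2(G) = \{i_1,\dots,i_m\}$; here $m \geq 2$ since $i_1^*, i_2^* \in S^{\mathsf{inf.}}_2(G)$ by Observation~\ref{ob:12in2}. The expected progeny is $P(i_m) + \beta P(i_{m-1}) + (1-\beta)\sum_{t=1}^{m-2} P(i_t) \log_2\!\big(P(i_t)/P(i_{t+1})\big)$. I would rewrite each logarithm as $\tfrac{1}{\ln 2}\int_{P(i_{t+1})}^{P(i_t)} \tfrac{\mathrm{d}z}{z}$, exactly as in the proof of Theorem~\ref{thm:LM-opt}, use $P(i_t)/z \geq 1$ on the interval $[P(i_{t+1}),P(i_t)]$, and telescope; using $(1-\beta)/\ln 2 = \beta$ this yields expected progeny $\geq P(i_m) + \beta P(i_1)$. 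Since $i_m \in S^{\mathsf{inf.}}_1(G)$ with $m>1$, Observation~\ref{ob:1-ratio} gives $P(i_m) \geq \tfrac12 P(i_1)$; and $i_1 = i_1^*$ while $i_2^*$ is the rank-$2$ agent of the set, so $P(i_2^*) = P(i_2) \leq P(i_1)$. Hence the ratio is at least $\frac{\frac12 P(i_1) + \beta P(i_1)}{P(i_1) + P(i_2^*)} \geq \frac{(\frac12 + \beta)P(i_1)}{2P(i_1)} = \frac{1/2 + \beta}{2} = \frac{3+\ln 2}{4(1+\ln 2)}$.

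For the Step~3 regime the key is a structural lemma about the last agent $i_m$ of $S^{\mathsf{inf.}}_2(G)$: (i) $i_m \notin S^{\mathsf{inf.}}_1(G)$, so the deterministically selected agent $i_m$ and the agent chosen by $(1/(1+\ln 2))$-LM (which always lies in $S^{\mathsf{inf.}}_1(G)$) are distinct and the selected set is a genuine pair; and (ii) $2P(i_m) \geq P(i_2^*)$. Part (ii) is a short argument in the spirit of Observation~\ref{ob:1-ratio}: in $G' = (N, E \setminus E_{i_m})$ agent $i_m$ has rank at most $2$, so all but one agent $k \neq i_m$ satisfy $P(k, G') \leq P(i_m, G') = P(i_m)$; since deleting the out-edges of $i_m$ removes from any progeny only agents of $P(i_m,G)$, we get $P(k, G) \leq P(k,G') + P(i_m) \leq 2P(i_m)$ for every such $k$, and since $P(i_2^*) \leq P(i_1^*)$ and at least one of $i_1^*, i_2^*$ is such a $k$, we obtain $P(i_2^*) \leq 2P(i_m)$ (the degenerate cases $i_m \in \{i_1^*, i_2^*\}$ being immediate). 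Granting the lemma, the expected progeny of LALD here equals $P(i_m) + \mathbb{E}_{j \sim (1/(1+\ln 2))\text{-LM}}[P(j)]$, which by Theorem~\ref{thm:LM-opt} ($(1/(1+\ln 2))$-LM being $\beta$-optimal) is at least $P(i_m) + \beta P(i_1^*) \geq \tfrac12 P(i_2^*) + \beta P(i_1^*)$. Writing $r = P(i_2^*)/P(i_1^*) \in (0,1]$, the ratio is at least $\frac{\frac12 r + \beta}{1+r}$; since $\beta = 1/(1+\ln 2) > \tfrac12$ this is decreasing in $r$, so it is minimized at $r = 1$, again giving $\frac{1/2 + \beta}{2} = \frac{3+\ln 2}{4(1+\ln 2)}$.

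The main obstacle is part~(i) of the lemma: showing that in the Step~3 regime the last agent of $S^{\mathsf{inf.}}_2(G)$ is genuinely new relative to $S^{\mathsf{inf.}}_1(G)$. This needs a case analysis on the shape of $S^{\mathsf{inf.}}_2(G)$ given by Observations~\ref{ob:2-chain} and \ref{ob:2-single}: when $i_2^* \in P(i_1^*)$ one argues that $S^{\mathsf{inf.}}_1(G)$ is a rank-prefix of the chain $S^{\mathsf{inf.}}_2(G)$, so its containing $i_m$ would force $S^{\mathsf{inf.}}_1(G) = S^{\mathsf{inf.}}_2(G)$, a contradiction; when $i_2^* \notin P(i_1^*)$ one first observes $i_2^* \notin S^{\mathsf{inf.}}_1(G)$ (deleting $E_{i_2^*}$ leaves $P(i_1^*)$ unchanged, so $i_1^*$ still outranks $i_2^*$) and then runs a similar prefix-type argument on the hanging chain of Observation~\ref{ob:2-single}. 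This step is essential rather than cosmetic: if a collision between $i_m$ and the LM-selected agent were possible, the integral bound would only give expected progeny $\geq (1-\beta)P(i_m) + \beta P(i_1^*)$, and the guarantee would degrade to $\frac{1+\beta}{4} \approx 0.40 < \frac{3+\ln 2}{4(1+\ln 2)}$. The telescoping identity and the bound $P(i_2^*) \leq 2P(i_m)$ are routine once the structural lemma is in hand.
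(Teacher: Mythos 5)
Your Step-2-regime analysis is correct and essentially identical to the paper's (telescope the logarithms via the integral, use Observation~\ref{ob:1-ratio} for $P(i_m)\geq\tfrac12 P(i_1)$ and $P(i_2^*)\leq P(i_1^*)$), and part~(ii) of your structural lemma, $2P(i_m)\geq P(i_2^*)$, is also correct --- your derivation of it (after deleting $E_{i_m}$ at most one agent outranks $i_m$, and the deletion removes only members of $P(i_m)$ from anyone's progeny) is in fact cleaner and more uniform than the paper's case split over Observations~\ref{ob:2-chain} and~\ref{ob:2-single}.

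The gap is part~(i) of your lemma: it is \emph{false} that in the Step-3 regime the last agent $i_m$ of $S^{\mathsf{inf.}}_2(G)$ lies outside $S^{\mathsf{inf.}}_1(G)$, because $S^{\mathsf{inf.}}_1(G)$ need not be a rank-prefix of the chain $S^{\mathsf{inf.}}_2(G)$. Concretely, take agents $a,b,c$, a leaf $f$ with edge $(f,a)$, nine leaves $g_1,\dots,g_9$ with edges $(g_j,c)$, and edges $(b,a)$, $(c,a)$, $(c,b)$. Then $P(a)=13$, $P(b)=11$, $P(c)=10$, so $i_1^*=a$, $i_2^*=b$, $i_3^*=c$ and $b\in P(a)$. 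Deleting $E_c$ leaves $P(a)=3$ and $P(b)=1$, so $c\in S^{\mathsf{inf.}}_1(G)$; deleting $E_b$ leaves $P(a)=12>11=P(b)$, so $b\in S^{\mathsf{inf.}}_2(G)\setminus S^{\mathsf{inf.}}_1(G)$. Hence $S^{\mathsf{inf.}}_1(G)=\{a,c\}$ and $S^{\mathsf{inf.}}_2(G)=\{a,b,c\}$: Step~3 applies, yet its deterministic pick $c$ is simultaneously the last agent of $S^{\mathsf{inf.}}_1(G)$, i.e., precisely the agent to which the embedded $(1/(1+\ln 2))$-LM assigns probability $1/(1+\ln 2)$. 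So the collision you yourself flagged as fatal does occur, the guarantee degrades toward $(1+\beta)/4\approx 0.40$ exactly as you compute, and the prefix argument cannot be repaired. To your credit, you correctly identified that disjointness of the two picks is what the bound needs --- the paper's own proof of this theorem simply writes $\mathbb{E}_S[\sum_{i\in S}P(i)]\geq P(i_m)+\tfrac{1}{1+\ln 2}P(i_1^*)$, silently presupposing that same disjointness --- but your proposal does not close that step, and as it stands neither argument handles graphs of the above form.
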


\begin{proof}
Suppose $S^{\mathsf{inf.}}_2(G) = \{ i_1, i_2, \cdots, i_m \}$ for any graph $G\in\mathcal{G}$,. There are two different cases that need consideration.
\begin{enumerate}
    \item If $S^{\mathsf{inf.}}_2(G)\setminus S^{\mathsf{inf.}}_1(G) = \emptyset$, then according to Observation~\ref{ob:1-like} and Theorem~\ref{thm:LM-opt}, we have
    \begin{align*}
    \frac{\mathbb{E}_{S}[\sum_{i\in S} P(i)]}{\sum_{i\in S_2^*} P(i)}&\geq\frac{P(i_m)+\frac{1}{1+\ln2}P(i_1^*)}{P(i_1^*)+P(i_2^*)}\\
    &\geq\frac{\left(\frac{1}{2}+\frac{1}{1+\ln2}\right)P(i_1)}{P(i_1)+P(i_2)}\\
    &\geq\frac{\frac{1}{2}+\frac{1}{1+\ln2}}{2}=\frac{3+\ln2}{4(1+\ln2)}.
    \end{align*}
    \item If $S^{\mathsf{inf.}}_2(G)\setminus S^{\mathsf{inf.}}_1(G) \neq \emptyset$, we first consider the progeny $P(i_m)$. When $i_2^*\in P(i_1^*)$, the structure of the 2-influential set is characterized by Observation~\ref{ob:2-chain}. Then after deleting $i_m$'s out-edges, $i_1\succ i_2$ holds. Hence, $P(i_m)\geq P(i_2) - P(i_m)$ which implies $2P(i_m)\geq P(i_2)$. When $i_2^*\notin P(i_1^*)$, the structure of the 2-influential set is characterized by Observation~\ref{ob:2-single}. Hence, at least one of $2P(i_m)\geq P(i_1)$ and $2P(i_m)\geq P(i_2)$ is satisfied, which all imply that $2P(i_m)\geq P(i_2)$. Therefore, let $P(i_2)/P(i_1) = \rho$ and we have
    \begin{align*}
    \frac{\mathbb{E}_{S}[\sum_{i\in S} P(i)]}{\sum_{i\in S_2^*} P(i)}& \geq\frac{P(i_m)+\frac{1}{1+\ln2}P(i_1^*)}{P(i_1^*)+P(i_2^*)} \\
    & \geq \frac{\frac{1}{2}P(i_2)+\frac{1}{1+\ln2}P(i_1)}{P(i_1)+P(i_2)}\\
    & = \frac{\rho/2 + 1/(1+\ln 2)}{1 + \rho} \\
    & = \frac{\frac{1}{2}(1+\rho) + \frac{1-\ln2}{2(1+\ln2)}}{1+\rho} \\
    & = \frac{1}{2}+\frac{1-\ln2}{2(1+\ln2)}\cdot\frac{1}{1+\rho} \\
    & \geq \frac{1}{2}+\frac{1-\ln2}{4(1+\ln2)}=\frac{3+\ln2}{4(1+\ln2)}.
    \end{align*}
\end{enumerate}
Therefore, we can conclude that the mechanism is $\frac{3+\ln2}{4(1+\ln2)}$-optimal.
\end{proof}

\section{Discussion}
In this paper, we investigate the incentive-compatible selection mechanisms for one or two influentials, where an agent's influential power is defined by her progeny. The goal is to select agents with progeny as large as possible and to prevent them 
from hiding their out-edges at the same time. 
Based on the idea of assigning possibilities of being selected to those agents who can pretend to be the one with the largest or second largest progeny,
we first propose the $1/(1+\ln2)$-LM mechanism for selecting one agent, which is optimal among all IC and fair single-agent selections. 
We then propose the LALD mechanism 
for selecting up to two influentials, which has an approximation ratio of $(3+\ln2)/(4(1+\ln2))$ ($\approx 0.54$). To the best of our knowledge, this is the first work to select more than one agent for progeny maximization. There are several interesting future directions worth investigating, and we provide some brief discussions here.

One direction is to narrow the gap between the current lower bound (given by our proposed mechanism) and the current upper bound ($23/27$ as we proved) of the approximation ratio for an optimal IC 2-selection mechanism. For the side of upper bounds, notice that our provided upper bound does not require additional properties like fairness defined in~\cite{zhang2021incentive}. This is because the fairness for selecting a single agent does not apply in selecting multiple agents (e.g., in LALD, the probability of choosing $i_1^*$ may be also related to the structure of the 2-influential set). If we extend the definition of fairness to $k$-fairness like
\begin{definition}[sketch]
$i_1^*$ (or also with $i_2^*$ to $i_k^*$) has the same probability to be chosen when the $k$-influential set and the structure formed by $P(i_1^*)$ (or also with $P(i_2^*)$ to $P(i_k^*)$) remain the same.
\end{definition}
\noindent Then, we can observe that $k$-fairness will become weaker when $k$ becomes larger. \citeauthor{zhang2021incentive}~\shortcite{zhang2021incentive} conjectured that dropping (1-)fairness will not affect the upper bound they characterized. If it can be proven to be true, then we can also draw a corollary that introducing $k$-fairness will not affect the upper bounds of approximation ratios for IC $k$-selection mechanisms. For the side of improving lower bounds, one may consider to utilize more agents in the 2-influential set but not in the 1-influential set. The main difficulty here is these agents 
may have too small progeny when $P(i_2^*) \ll P(i_1^*)$.


The other direction is to extend the mechanisms for selecting more agents ($k\geq 3$). Similar to the case of selecting two agents, only selecting agents in the $k'$-influential set with $k'<k$ may limit the performance. A natural idea is to select $k$ agents in the $k$-influential set. The main difficulty here is that the structure of the $k$-influential set will become more and more complex when $k$ becomes larger. Intuitively, the structure of the $k$-influential set depends on the relationships among agents $i_1^*$, $\dots$, $i_k^*$. The number of different cases of the structure will grow exponentially with $k$, which is roughly $2^{O(n^2)}$. A possible way to handle this challenge may be recursively considering the influential set with lower $k$.

Finally, in terms of other applications, such as recruiting agents to promote some advertisements, designing selection mechanisms to maximize the expected cardinality of the union of progeny is also a promising future direction.

\section*{Acknowledgements}
This work is supported by Science and Technology Commission of Shanghai Municipality (No. 22ZR1442200 and No. 23010503000), and Shanghai Frontiers Science Center of Human-centered Artificial Intelligence (ShangHAI).

\bibliographystyle{named}
\bibliography{ijcai23}

\end{document}